%% file: main.tex
\documentclass{llncs}

\usepackage{amsfonts}
\usepackage{xcolor}
\usepackage{graphicx}
\usepackage{amssymb}
\usepackage{amsmath}
\usepackage{stmaryrd}

\usepackage{algorithm}
\usepackage[indLines=false]{algpseudocodex}

\usepackage{hyperref}
\usepackage{cleveref}
\usepackage{xspace}
\usepackage{marginnote}
\usepackage{multirow}
\usepackage{rotating}
\usepackage{colortbl}
\usepackage{thm-restate}
\usepackage{enumitem}
\usepackage{orcidlink} 
\usepackage[misc,geometry]{ifsym} 

\usepackage{float}
\usepackage{caption}
\usepackage{subcaption} 

\usepackage{array}
\usepackage{booktabs}
\usepackage{soul} 
\usepackage{hhline}

\usepackage{adjustbox}
\usepackage{etoolbox}
\AtEndEnvironment{proof}{\qed}
\usetikzlibrary{tikzmark, arrows.meta, positioning}

\crefformat{line}{#2l. #1#3}
\crefrangeformat{line}{ll. #3#1#4--#5#2#6}
\crefmultiformat{line}{ll. #2#1#3}{ and #2#1#3}{, #2#1#3}{ and #2#1#3}

\input{macros_fig_imi}

\input{macros}

\title{State-Space Abstractions for\\Parametric Timed Games
     \thanks{This work was supported by
         the CNRS International Research Network CLoVe
         and Innovationsfonden Danmark's DIREC project SIoT (Secure Internet of Things).
        }
    \thanks{This is the full version of the paper under the same title accepted to
QEST+FORMATS 2026.}
    }

\author{
 Mikael Bisgaard Dahlsen-Jensen \Letter\inst{1}\orcidlink{0000-0003-0641-7635}
 ,
 Laure Petrucci \inst{2}\orcidlink{0000-0003-3154-5268},\\
 Jaco van de Pol \inst{1}\orcidlink{0000-0003-4305-0625}
 }

\institute{
 Aarhus University, Aarhus, Denmark
 \\\email{\{mikael,jaco\}@cs.au.dk}
 \and
 Université Sorbonne Paris Nord, LIPN, CNRS UMR 7030, Villetaneuse, France
 \\ \email{Laure.Petrucci@lipn.univ-paris13.fr}
 }

 \authorrunning{Dahlsen-Jensen, Petrucci, van de Pol}

\titlerunning{State-Space Abstractions for Parametric Timed Games}

\RequirePackage{graphicx}
\graphicspath{{badges/}}
\usepackage[firstpageonly=true,angle=0,vpos=.147\paperheight,hpos=.39\linewidth,vanchor=t,hanchor=l]{draftwatermark}
\SetWatermarkText{%
\raisebox{-1cm}
{\begin{minipage}{\linewidth}\centering\includegraphics[width=12.5mm]
{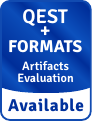}\hfill\includegraphics[width=12.5mm]
{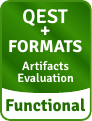}\end{minipage}}%
}

\begin{document}

\maketitle

\setlength{\abovedisplayskip}{0pt}
\setlength{\belowdisplayskip}{0pt}
\begin{abstract}
Synthesizing controllers for real-time systems under both timing uncertainty and adversarial environments requires exploring prohibitively large symbolic state spaces. While zone inclusion checking has been applied to Parametric Timed Games, more aggressive abstractions from the Parametric Timed Automata and Timed Games literature---double inclusion, zone merging, hull abstractions, and location-based abstraction---have not yet been lifted to the parametric game setting. We define a general abstraction framework for Parametric Timed Games, instantiate it with each of the aforementioned abstractions, and prove that the framework preserves correctness of parameter synthesis and winning strategies. Experimental results on an established production cell benchmark and a novel adversarial IoT case study show that the abstractions significantly improve scalability, solving instances previously intractable for existing techniques.
\end{abstract}

\section{Introduction}
\label{sec:intro}
\input{intro}

\section{Preliminaries}
\input{prelim}

\section{State space reduction techniques}
\input{abstractions}

\section{Experimental Evaluation}
\input{eval}

\section{Conclusion}
\input{conclusion}

\subsubsection*{Acknowledgements.}
We thank Baptiste Fievet for contributions to the algorithmic ideas, and Étienne André for advice on adapting zone merging to the parametric game setting.
Parts of this paper used generative AI tools (Claude by Anthropic) to assist with drafting and editing for style and clarity.

\paragraph{Data Availability.} The models, scripts, and tools to reproduce our experimental evaluation are archived and publicly available at~\cite{abstractions_artifact}. 

\bibliographystyle{plain}
\bibliography{bibliography.bib}

\newpage
\appendix
\input{appendix}

\end{document}

%% file: macros_fig_imi.tex
\usepackage{pgfplots}
\usepackage{tikz}

\usetikzlibrary{arrows,automata,fit,positioning,shadows,shapes}
\usetikzlibrary{decorations,snakes,calc} 
\usetikzlibrary{patterns, patterns.meta}
\tikzstyle{location}=[rectangle, rounded corners, minimum size=12pt, draw=black, fill=blue!10, inner sep=2pt]
\tikzstyle{pta}=[auto, ->, >=stealth']
\tikzstyle{PZG}=[auto, ->, >=stealth']
\tikzstyle{mergingFigure} = [>=stealth', node distance=1.8cm, yscale=.6]
\tikzstyle{location10}=[location, minimum size=10pt]
\tikzstyle{invariant}=[draw=black, dotted, inner sep=1pt, node distance=0] 
\tikzstyle{final}=[fill=green!70,double]
\tikzstyle{urgent}=[dotted, draw=red, very thick, fill=yellow]
\tikzstyle{bad}=[fill=red]

\tikzstyle{pzgstate} = [
	rectangle split,
	rectangle split horizontal,
	rectangle split parts=2,
	draw,
	rectangle split part align={center,left},
	rounded corners,
	minimum height=0.5 cm,
	minimum width=1.5 cm,
	thick
	]
\tikzstyle{fillred} = [ fill=red!20 ]
\tikzstyle{fillblue} = [ fill=blue!20 ]
\tikzstyle{fillyellow} = [ fill=yellow!20 ]
\tikzstyle{line} = [ draw,-latex',thick ]
\tikzstyle{highlightarrow} = [
	preaction={
		draw,
		line width=0.5mm
	}
]

\tikzstyle{urgent}=[fill=yellow, thick, dotted] 
\tikzstyle{private}=[fill=red!50,thick]

\pgfdeclarepatternformonly{north west lines wide}
{\pgfqpoint{-1pt}{-1pt}}
{\pgfqpoint{10pt}{10pt}}
{\pgfqpoint{9pt}{9pt}}
{
	\pgfsetlinewidth{3pt}
	\pgfpathmoveto{\pgfqpoint{0pt}{9.1pt}}
	\pgfpathlineto{\pgfqpoint{9.1pt}{0pt}}
	\pgfusepath{stroke}
}

\tikzset{onslide/.code args={<#1>#2}{%
		\only<#1>{\pgfkeysalso{#2}} 
}}

\definecolor{coloract}{rgb}{0.50, 0.70, 0.30}
\definecolor{colorclock}{rgb}{0.4, 0.4, 1}
\definecolor{colordisc}{rgb}{1, 0, 1}
\definecolor{colorloc}{rgb}{0.4, 0.4, 0.65}

\definecolor{colorparam}{rgb}{.66, 0.4, 0.0}
\definecolor{colorstate}{rgb}{1, 0.4, 0.0}

\definecolor{loccolor1}{rgb}{1, 0.6, 0.45}
\definecolor{loccolor2}{rgb}{0.45, 1, 0.45}
\definecolor{loccolor3}{rgb}{0.8, 0.8, 1}
\definecolor{loccolor4}{rgb}{1, 0.45, 1}
\definecolor{loccolor5}{rgb}{1, 1, 0.45}
\definecolor{loccolor6}{rgb}{0.45, 1, 1}
\definecolor{loccolor7}{rgb}{0.9, 0.6, 0.2}
\definecolor{loccolor8}{rgb}{0.7, 0.4, 1}
\definecolor{loccolor9}{rgb}{0.5, 1, 0.75}
\definecolor{loccolor10}{rgb}{0.8, 0.7, 0.6}
\definecolor{loccolor11}{rgb}{0.6, 0.7, 0.8}
\definecolor{loccolor12}{rgb}{0.2, 0.5, 0.9}
\definecolor{loccolor13}{rgb}{0.5, 0.9, 0.2}
\definecolor{loccolor14}{rgb}{0.9, 0.2, 0.5}
\definecolor{loccolor15}{rgb}{0.7, 0.7, 0.7}
\definecolor{loccolor16}{rgb}{0.8, 0.8, 0.5}

\usepackage{xcolor}
\definecolor{darkgreen}{rgb}{0.0, 0.4, 0.08}
\definecolor{lighterblack}{rgb}{.4, .4, .4}
\definecolor{colorparam}{rgb}{1, 0.6, 0.0}
\definecolor{mygreen}{rgb}{0,0.6,0}
\definecolor{mygray}{rgb}{0.5,0.5,0.5}
\definecolor{mymauve}{rgb}{0.58,0,0.82}

\definecolor{gris}{rgb}{0.6,0.6,0.6}
\definecolor{grisfonce}{rgb}{0.2,0.2,0.2}
\definecolor{turquoise}{rgb}{0, 1, 1}
\definecolor{vertfonce}{rgb}{0,0.85,0}
\definecolor{violet}{rgb}{0.8,0,0.8}
\definecolor{grispale}{rgb}{0.9, 0.9, 0.9}
\definecolor{cpale1}{rgb}{1, 0.3, 0.3}
\definecolor{cpale2}{rgb}{0.3, 1, 0.3}
\definecolor{cpale3}{rgb}{0.3, 0.3, 1}
\definecolor{cpale4}{rgb}{1, 0.3, 1}
\definecolor{cpale5}{rgb}{1, 1, 0.3}
\definecolor{cpale6}{rgb}{0.3, 1, 1}
\definecolor{cpale7}{rgb}{0.9, 0.6, 0.2}
\definecolor{cpale8}{rgb}{0.7, 0.4, 1}
\definecolor{cpale9}{rgb}{0.5, 1, 0.75}
\definecolor{cpale10}{rgb}{0.8, 0.7, 0.6}
\definecolor{cpale11}{rgb}{0.6, 0.7, 0.8}
\definecolor{cpale12}{rgb}{0.2, 0.5, 0.9}
\definecolor{cpale13}{rgb}{0.5, 0.9, 0.2}
\definecolor{cpale14}{rgb}{0.9, 0.2, 0.5}
\definecolor{cpale15}{rgb}{0.7, 0.7, 0.7}
\definecolor{cpale16}{rgb}{0.8, 0.8, 0.5}
\definecolor{bleuciel}{rgb}{0.90,0.95,1}
\definecolor{cv1}{rgb}{1, 0, 0}
\definecolor{cv2}{rgb}{0, 1, 0}
\definecolor{cv3}{rgb}{0, 0, 1}
\definecolor{cv4}{rgb}{1, 1, 0}
\definecolor{cv5}{rgb}{1, 0, 1}
\definecolor{cv6}{rgb}{0, 1, 1}
\definecolor{cv7}{rgb}{0.8, 0.6, 0.4}
\definecolor{cv8}{rgb}{0.5, 0.5, 1}
\definecolor{cv9}{rgb}{0.55, 0.75, 0.35}
\definecolor{cv10}{rgb}{1, 0.6, 0.1}
\definecolor{cv11}{rgb}{0.6, 0.7, 0.8}
\definecolor{cv12}{rgb}{0.2, 0.5, 0.9}
\definecolor{cv13}{rgb}{0.5, 0.9, 0.2}
\definecolor{cv14}{rgb}{1, 0.3, 0.5}
\definecolor{cv15}{rgb}{0.7, 0.7, 0.7}
\definecolor{cv16}{rgb}{0.8, 0.8, 0.5}
\definecolor{cvorange}{rgb}{1,.8,0.5}
\definecolor{colortask}{rgb}{0.5, 0.2, 0.9} 

%% file: macros.tex
\newcommand{\marginX}{\marginnote{\huge{\quad\textbf{!}\quad}}}

\newcommand{\timeout}{\texttt{T/O}\xspace}

\ifdefined\VersionWithComments
  \newcommand{\mbdj}[1]{\textcolor{purple}{\marginX{}[\textbf{Mikael}: #1]}}
  \newcommand{\bfi}[1]{\textcolor{orange}{\marginX{}[\textbf{Baptiste}: #1
  ]}}
  \newcommand{\lp}[1]{\textcolor{green!70!black}{\marginX{}[\textbf{Laure}: #1 ]}}
  \newcommand{\jvdp}[1]{\textcolor{blue!40}{\marginX{}[\textbf{Jaco}:
  #1 ]}}

\else
  \newcommand{\mbdj}[1]{}
  \newcommand{\bfi}[1]{}
  \newcommand{\lp}[1]{}
  \newcommand{\jvdp}[1]{}
\fi

\newcommand{\eg}{e.g.\xspace}
\newcommand{\ie}{i.e.\xspace}

\newcommand{\game}{\ensuremath{G} }

\newcommand{\LocSet}{\ensuremath{\mathbb{L}}}
\newcommand{\loc}{\ensuremath{\ell} }

\newcommand{\ClockSet}{\ensuremath{X}}

\newcommand{\ParamSet}{\ensuremath{P}}

\newcommand{\TransSet}{\ensuremath{T} }
\newcommand{\trans}{\ensuremath{t} }

\newcommand{\temptrans}{\to^{\delta}}
\newcommand{\disctrans}{\to^{t}}

\newcommand{\guard}{\ensuremath{g} }

\newcommand{\SubsetSet}[1]{\ensuremath{\mathcal{P}(#1)} }

\newcommand{\LabelSet}{\ensuremath{Act} }

\newcommand{\Inv}{\ensuremath{\mathit{Inv}}}

\newcommand{\val}{\ensuremath{v} }
\newcommand{\ClockValSet}[1]{ \ensuremath{\mathbb{R}_{\geq 0}^{#1}}}
\newcommand{\ParamValSet}[1]{ \ensuremath{\mathbb{Q}_{\geq 0}^{#1}}}
\newcommand{\ValSet}{\ensuremath{V}}

\newcommand{\StateSpace}{\ensuremath{\mathbb{S}}}

\newcommand{\ParamLinearTerm}{\ensuremath{plt} }

\newcommand{\ZoneFormula}{\phi}

\newcommand{\delay}{\ensuremath{\delta} }

\newcommand{\TargetSet}{\ensuremath{{R}} }

\newcommand{\zone}{\ensuremath{Z} }
\newcommand{\ZoneSet}{\ensuremath{\mathcal{Z}} }
\newcommand{\ZonesAt}[1]{\ensuremath{\ZoneSet_{#1}}}

\newcommand{\mergefun}{\ensuremath{\alpha}}
\newcommand{\ftriv}{\ensuremath{\mergefun_{\mathrm{triv}}}}
\newcommand{\finc}{\ensuremath{\mergefun_{\subseteq}^{\leftrightarrow}}}
\newcommand{\floc}{\ensuremath{\mergefun_{\mathrm{loc}}}}
\newcommand{\fhull}{\ensuremath{\mergefun_{\mathrm{ch}}}}
\newcommand{\focthull}{\ensuremath{\mergefun_{\mathrm{oct}}}}
\newcommand{\fboxhull}{\ensuremath{\mergefun_{\mathrm{box}}}}
\newcommand{\fhullgen}{\ensuremath{\mergefun_{h}}}
\newcommand{\repmap}[1]{\ensuremath{\rho_{#1}}}
\newcommand{\PZGred}[1]{\ensuremath{\mathit{PZG}^{#1}}}

\newcommand{\SymbState}{\ensuremath{\xi} }
\newcommand{\SymbStateSet}{\ensuremath{\Xi} }
\newcommand{\SymbStateInit}{\ensuremath{\xi_I} }

\newcommand{\TempPred}[1]{ \ensuremath{ #1^{\swarrow} } }
\newcommand{\TempSucc}[1]{ \ensuremath{ #1^{\nearrow} } }

\newcommand{\ProjectParam}[1] { \ensuremath{#1{\downarrow_P}}}

\newcommand{\imitator}{\textsc{Imitator}\xspace}
\newcommand{\Uppaal}{\textsc{Uppaal}\xspace}

\newcommand{\Win}{\ensuremath{\mathit{Win}}}

\newcommand{\Depends}{\ensuremath{\mathit{Depends}}}
\newcommand{\Succ}{\ensuremath{\mathit{Succ}}}
\newcommand{\Pred}{\ensuremath{\mathit{Pred}}}
\newcommand{\SafePred}{\ensuremath{\mathit{SafePred}}}
\newcommand{\NewWin}{\ensuremath{\mathit{NewWin}}}

\newcommand{\WinningParam}{\ensuremath{\mathit{WinningParam}}}
\newcommand{\Uncontrollable}{\ensuremath{\mathit{UnCtrl}}}

\newcommand{\Explored}{\ensuremath{\mathit{Explored}}}
\newcommand{\WaitingExplore}{\ensuremath{\mathit{WaitingExplore}}}
\newcommand{\WaitingUpdate}{\ensuremath{\mathit{WaitingUpdate}}}
\definecolor{ColorLosingProp}{HTML}{1B9E77}
\definecolor{ColorCumuPrune}{HTML}{D95F02}
\definecolor{ColorCovPrune}{HTML}{7570B3}
\definecolor{ColorInc}{HTML}{E7298A}

\crefname{definition}{Def.}{Defs.}
\crefname{theorem}{Thm.}{Thms.}
\crefname{lemma}{Lem.}{Lemmas}
\crefname{line}{Line}{Lines}
\crefname{algorithm}{Alg.}{Algs.}
\crefname{figure}{Fig.}{Figs.}
\crefname{appendix}{Appendix}{Appendices}
\crefname{section}{Sec.}{Sections}
\crefname{table}{Table}{Tables}

\newcommand{\WinningMove}{\ensuremath{\mathit{WinningMove}}}

%% file: intro.tex

State-space abstractions have proven effective for scaling the verification of parametric real-time systems, but a general abstraction framework is lacking for the game-theoretic setting where a controller must win against an adversary.
Timed automata (TA)~\cite{Intro-TA} provide a formal model for real-time systems, extended by
Parametric Timed Automata (PTA)~\cite{DBLP:conf/stoc/AlurHV93}, which replace concrete timing constants
with unspecified parameters.
PTA support \emph{parameter synthesis}: computing the set of parameter valuations under which a given property holds.
Timed Games (TG)~\cite{Classic-TG,DBLP:conf/concur/CassezDFLL05} complement this by introducing a two-player structure,
distinguishing \emph{controllable} actions (chosen by a controller) from \emph{uncontrollable} ones
(chosen by an adversary), and asking whether a winning strategy exists for the controller.
Parametric Timed Games (PTG)~\cite{DBLP:conf/wodes/JovanovicFLR12,DBLP:journals/ijcon/JovanovicLR19}
combine both settings: they synthesize the set of parameters under which the controller can guarantee winning.
An on-the-fly PTG reachability algorithm was implemented in~\cite{DBLP:conf/tacas/DahlsenJensenFPP24}
and extended to full controller synthesis in~\cite{DBLP:conf/qestformats/DahlsenJensenFPP25}.

PTG algorithms suffer from \emph{state-space explosion}:
the state space is explored symbolically using \emph{parametric zones}, convex constraints over clocks and parameters. Reducing the number of zones while preserving correctness is the central challenge addressed in this paper.
While zone inclusion checking has been applied to PTG~\cite{DBLP:conf/tacas/DahlsenJensenFPP24},
more aggressive abstractions remain unexplored:
double inclusion and zone merging~\cite{DBLP:conf/atva/AndreFS13,DBLP:conf/formats/AndreMPP22} from the PTA setting
and location-based abstraction~\cite{DBLP:conf/concur/JensenLLS25} from the TG setting.
Lifting these to PTG is non-trivial:
the PTA techniques must be shown to preserve winning strategies, not just reachability,
while the TG technique must be extended to handle parameters.

\medskip\noindent\textbf{Contributions.} We make the following contributions:
\begin{enumerate}
    \item A \emph{general abstraction framework} for PTG that preserves correctness of parameter synthesis and winning strategies for any valid instantiation (\cref{sec:statespacered}).
    \item Seven concrete instantiations, recovering known PTA and TG techniques and introducing new abstractions inspired by numerical abstract domains (convex hull, octagonal hull and box hull).
    \item An implementation in the \imitator{} model checker~\cite{DBLP:conf/cav/Andre21}, evaluated on a pro\-duction-cell benchmark~\cite{ProdCellCaseStudy} and a novel adversarial IoT case study, showing reduced state spaces and improved scalability, with several instances going from timeout to terminating in seconds.
\end{enumerate}

\medskip\noindent\textbf{Related work.}
Difference Bound Matrices (DBMs)~\cite{DBLP:conf/avmfss/Dill89} are the standard zone representation for timed automata,
restricting zones to conjunctions of difference-bound constraints;
Parametric DBMs~\cite{DBLP:conf/tacas/HuneRSV01} extend this to the parametric setting. Alternatively, symbolic zones can be represented by convex polyhedra~\cite{DBLP:conf/cav/Andre21}.
DBMs can be viewed as abstractions, providing over-approximations of arbitrary convex polyhedra.
In the non-parametric setting, extrapolation techniques such as LU-bounds~\cite{DBLP:conf/tacas/BehrmannBLP04}
are widely used zone abstractions for timed automata, \eg in \Uppaal{}~\cite{DBLP:journals/sttt/LarsenPY97}.
Zone extrapolation was first adapted to the parametric setting in~\cite{DBLP:conf/rp/AndreLR15,DBLP:conf/sefm/BezdekBBC16} and developed further in~\cite{DBLP:conf/nfm/ArcileA22}.

Zone inclusion checking was shown to be effective for timed games in~\cite{DBLP:conf/concur/CassezDFLL05}
and has been used in PTG~\cite{DBLP:conf/tacas/DahlsenJensenFPP24}; our framework provides a general correctness proof subsuming this technique.
For PTA, zone merging was introduced in~\cite{DBLP:conf/atva/AndreFS13}
and further developed in~\cite{DBLP:conf/formats/AndreMPP22}.
Our hull abstractions draw on classical numerical abstract domains:
intervals~\cite{DBLP:conf/popl/CousotC77},
convex polyhedra~\cite{DBLP:conf/popl/CousotH78},
and octagons~\cite{DBLP:conf/wcre/Mine01}. We now lift these abstractions to PTGs.
A form of convex hull approximation for zones was introduced in~\cite{DBLP:conf/rtss/Balarin96} and is used in \Uppaal{}: it computes the smallest DBM containing two given zones, which includes the true convex hull but may be strictly larger when the hull is not representable as a DBM.
A location-based abstraction was used in~\cite{DBLP:conf/concur/JensenLLS25} to speed up on-the-fly symbolic algorithms with abstractions for timed ATL. We lift this abstraction to parametric timed reachability games.
Foundational frameworks for abstracting games~\cite{DBLP:conf/sas/HenzingerMMR00,DBLP:conf/lics/AlfaroGJ04} abstract the moves of the players, requiring an asymmetric treatment of the two players; the \textsc{Synthia} tool~\cite{DBLP:conf/cav/PeterEM11} synthesizes controllers for (non-parametric) timed games by abstraction refinement.
In contrast, our abstractions coarsen only the explored symbolic state space, keeping the winning computation exact (cf.\ \cref{sec:algorithm}).

\medskip\noindent\textbf{Outline.}
\Cref{sec:prelim} introduces PTG and the necessary background.
\Cref{sec:statespacered} defines the abstraction framework, adapts the on-the-fly PTG algorithm, proves its correctness, and instantiates it with concrete abstractions.
\Cref{sec:expe} presents the experimental evaluation, including the adversarial IoT case study.
\Cref{sec:concl} concludes.

%% file: prelim.tex

\label{sec:prelim}
A Parametric Timed Game (PTG) is a structure based on timed automata (TA). Like classical automata, it has locations connected by discrete transitions. It also has clocks.
Locations are associated a condition on clock valuations (invariant) that must
be satisfied while staying in the location.
An action in a timed automaton is either to take a discrete transition or to let some
time pass.
Discrete transitions have a guard that must be
satisfied in order to take the transition. In a parametric setting, these conditions
use linear terms over clocks and parameters. Parameters are unspecified but constant during a run. A discrete transition has a subset of clocks which are reset when the transition is taken.

    A \emph{clock valuation} is a function $\val_\ClockSet \in \ClockValSet
    {\ClockSet}$ assigning a non-negative real value to each clock.
    A \emph{parameter valuation} $\val_\ParamSet \in \ParamValSet
    {\ParamSet}$ assigns a non-negative rational value to each parameter.
    A \emph{valuation of a game} is a pair $\val = (\val_\ClockSet,\val_\ParamSet)$. 
    The set of all valuations of the game is denoted
    $\ValSet = \ClockValSet{\ClockSet} \times \ParamValSet{\ParamSet}$.
A \emph{linear term} over $\ParamSet$ is a term defined by the following grammar:
$\ParamLinearTerm \; := \; k \; | \; kp \; | \; \ParamLinearTerm + \ParamLinearTerm$
where $k \in \mathbb{Q}$ and $p \in \ParamSet$.
\emph{Zones} allow for capturing a set of valuations in a game.
The set of \emph{parametric zones} $\ZoneSet(\ClockSet,\ParamSet)$ is the set of formulas
defined inductively by the following grammar:
\(\ZoneFormula \; := 
\; \top
\; | \; \ZoneFormula \land \ZoneFormula \; | \; x \sim \ParamLinearTerm 
\; | \; x-y \sim \ParamLinearTerm
\; | \; \ParamLinearTerm \sim 0\)
where $x, y\in\ClockSet$, ${\sim} \in \{ <; \leq; =; \geq;
> \}$ and $\ParamLinearTerm$ is a linear term over $\ParamSet$.

In a two-player timed game, discrete transitions are partitioned
between controllable transitions and uncontrollable (environment) transitions.

\begin{definition}[PTG]
A \emph{Parametric Timed Game} is a tuple of the form
$\game = (\LocSet , \ClockSet, \ParamSet, \LabelSet, \TransSet_c, \TransSet_u, \loc_0, \Inv)$ such that
\begin{itemize}[noitemsep, topsep=0pt]
    \item $\LocSet$, $\ClockSet$, $\ParamSet$, $\LabelSet$ are sets of \emph{locations}, \emph{clocks},
        \emph{parameters}, \emph{transition labels}.
    \item
        We write $\TransSet = \TransSet_c \sqcup \TransSet_u$ for the set of all \emph{transitions}, partitioned into \emph{controllable} ($\TransSet_c$) and \emph{uncontrollable} ($\TransSet_u$) ones, where
        $\TransSet \subseteq \LocSet \times \ZoneSet(\ClockSet,\ParamSet) \times
            \LabelSet \times \SubsetSet{\ClockSet} \times \LocSet$
        is the set of \emph{transitions} of the form
        $(\loc,\guard,a,Y,\loc')$ where:
        $\loc$, $\loc'$ are source
        and target locations, $\guard$ is the guard,      
        $a$ the label, $Y$ the set of clocks to reset.
    
    \item $\loc_0$ is the initial location.
    \item $\Inv \; : \; \LocSet \to \ZoneSet(\ClockSet,\ParamSet)$ associates an
        \emph{invariant} with each location.
\end{itemize}
\end{definition}

Function $\val_\ParamSet$ is naturally extended to linear terms on parameters, by
replacing each parameter in the term with its valuation.
With $\val \models \ZoneFormula$, we denote that
valuation $\val = (\val_\ClockSet, \val_\ParamSet)$ \emph{satisfies}
a zone $\ZoneFormula$.
Zones
can also be seen as a convex set in the space of valuations
by considering those satisfying the condition. 

\subsection{Semantics of Parametric Timed Games}

A \emph{state} of a PTG consists of a location and a valuation of clocks and parameters.
Transitions modify clock valuations by letting time pass or resetting clocks.

    Let $\val = (\val_\ClockSet,\val_\ParamSet)$ be a valuation of the game
    and $\delay \geq 0$ a delay.
    $\forall x\in\ClockSet: (\val_\ClockSet+\delay)(x) =
    \val_\ClockSet(x) + \delay$
and
    $\val+\delay = (\val_\ClockSet +\delay , \val_\ParamSet)$.
    Let
    $Y \subseteq \ClockSet$.
    $\val_\ClockSet[Y:=0]$ is the valuation obtained by \emph{resetting the clocks} in $Y$,
    \ie{}:
    $\forall x\in Y: \val_\ClockSet[Y:=0](x)=0$ and
    $\forall x\in\ClockSet\setminus Y:
        \val_\ClockSet[Y:=0](x)=\val_\ClockSet(x)$
, and  
    $\val[Y:=0]= (\val_\ClockSet[Y:=0], \val_\ParamSet)$.

The semantics of a Parametric Timed Game is defined
as a timed transition system on states, with timed and discrete transitions, and a winning condition.
A \emph{state} of a PTG is a pair $(\loc,\val)$ where $\loc$
is a location and $\val$ a valuation of the game satisfying its invariant:
$\val \models\Inv(\loc)$. The \emph{state space} is then
$\StateSpace = \{ (\loc,\val)\in \LocSet \times \ValSet \; | \;
\val \models \Inv(\loc) \}  = {\bigcup}_{\loc \in \LocSet}
\; \{\loc\} \times \Inv(\loc)$.
Let $\delay \in \mathbb{R}_{\geq 0}$ be a time delay. A \emph{timed transition} is
a relation ${\temptrans} \in \StateSpace \times \StateSpace$ s.t.
$\forall (\loc,\val),(\loc',\val')\in \StateSpace:
(\loc,\val)\temptrans (\loc',\val')$ iff
$\loc = \loc'$ and $\val' = \val + \delay$.
Let $\trans = (\loc,\guard,a,Y,\loc') \in \TransSet$
be a transition. A \emph{discrete transition} is a relation
${\disctrans} \in \StateSpace \times \StateSpace$ s.t.
$\forall (\loc,\val), (\loc',\val')\in \StateSpace:
(\loc,\val) \disctrans (\loc',\val')$ iff
$\val \models \guard$ and $ \val' =  \val[Y:=0]$.

States are grouped in symbolic states, similar to valuations grouped in zones.

\begin{definition}[symbolic state]\label{def:symbstate}
    A \emph{ symbolic state} of a PTG is a pair $\SymbState=(\loc,\zone)$ where $\loc$
    is a location and $\zone$ a zone of the game satisfying its invariant:
    $\zone \models \Inv(\loc)$.
    The symbolic state $(\loc,\zone)$ represents the subset of states $
    \{\loc\} \times \zone \subseteq \StateSpace$.

\end{definition}

\begin{definition}[temporal zone operators]
  For a zone $\zone$, its \emph{time-forward closure} and
  \emph{time-backward closure} are:
  \begin{align*}
    \TempSucc{\zone} &:= \{\, v + \delay \mid v \in \zone,\, \delay \geq 0 \,\}, \\
    \TempPred{\zone} &:= \{\, v \mid \exists\,\delay \geq 0 : v + \delay \in \zone \,\}.
  \end{align*}
\end{definition}

\begin{definition}[discrete zone operators]
  Let $\trans = (\loc, \guard, a, Y, \loc') \in \TransSet$.
  For a zone $\zone$:
  \begin{align*}
    \Succ(\trans, \zone) &:= \{\, v[Y:=0] \mid v \in \zone,\, v \models \guard \,\}, \\
    \Pred(\trans, \zone)     &:= \{\, v \mid v \models \guard,\; v[Y:=0] \in \zone \,\}.
  \end{align*}
\end{definition}

\begin{definition}[safe temporal predecessor]
  For zones $W$ and $U$:
  \[
    \SafePred(W, U) :=
      \bigl\{\, v \;\bigm|\;
        \exists\,\delay \geq 0 :\;
        v + \delay \in W \;\land\;
        \forall\,\delay' \in [0,\delay] : v + \delay' \notin U
      \,\bigr\}.
  \]
  That is, $v \in \SafePred(W, U)$ if time can elapse from $v$ to reach $W$
  without passing through $U$.
\end{definition}

\begin{remark}[zone operators on symbolic states]\label{rem:zone-ops-lift}
  The operators above are defined on zones (sets of valuations).
  They lift to symbolic states by intersecting with the location invariant
  (so that the zone of the result satisfies the invariant, as required by \cref{def:symbstate}):
  for $\SymbState = (\loc, \zone)$ and $\trans = (\loc, \guard, a, Y, \loc')$:
  \begin{align*}
    \TempSucc{\SymbState} &:= (\loc,\; \TempSucc{\zone} \cap \Inv(\loc)), \\
    \TempPred{\SymbState} &:= (\loc,\; \TempPred{\zone} \cap \Inv(\loc)), \\
    \Succ(\trans, \SymbState) &:= (\loc',\; \Succ(\trans, \zone) \cap \Inv(\loc')).
  \end{align*}
  $\Pred$ and $\SafePred$ are used directly on zones throughout.
  We will not distinguish a zone from its corresponding set of valuations in our notation.
\end{remark}

Let $\vec{0}$ be the clock valuation where all clocks have value $0$.
The set of possible initial states of the PTG is
$\SymbState_0 = \{ (\loc_0, (\vec{0}, \val_\ParamSet)) \; | \;
\val_\ParamSet \in \ParamValSet{\ParamSet}:
(\vec{0}, \val_\ParamSet) \models \Inv(\loc_0) \}$.

Algorithms for PTA (and thus PTG) explore and operate on a Parametric Zone Graph (PZG).
\begin{definition}[parametric zone graph]
    A \emph{Parametric Zone Graph} (PZG) for a PTG
    $G=(\LocSet , \ClockSet, \ParamSet, \LabelSet, \TransSet_c,
    \TransSet_u, \loc_0, \Inv)$
    is a tuple $(\SymbStateSet, \SymbStateInit,
    \Rightarrow^\trans_c,\Rightarrow^\trans_u)$
    where $\SymbStateSet \subseteq 2^\StateSpace$ is a set of symbolic states,
    $\SymbStateInit \in \SymbStateSet$ is the initial symbolic state,
    and ${\Rightarrow^\trans_c}, {\Rightarrow^\trans_u} \subseteq \SymbStateSet \times \SymbStateSet$ are transition relations labeled by transitions of~$G$.
\end{definition}

\begin{definition}[canonical PZG]\label{def:canonical-pzg}
    The \emph{canonical PZG} of a PTG~$G$ is the PZG $(\SymbStateSet, \SymbStateInit,
    \Rightarrow^\trans_c,\Rightarrow^\trans_u)$ where $\SymbStateInit = \TempSucc{\SymbState_0}$ and the transition relations are defined by:
    $\SymbState \Rightarrow^t_c \SymbState'$ iff $\SymbState' = \TempSucc{\Succ(\trans,\SymbState)}$ and $\trans\in T_c$; and
    $\SymbState \Rightarrow^t_u \SymbState'$ iff $\SymbState' = \TempSucc{\Succ(\trans,\SymbState)}$ and $\trans\in T_u$;
    and $\SymbStateSet$ is the smallest set containing $\SymbStateInit$ and closed under these transitions.
\end{definition}
The canonical PZG exists and is unique: $\SymbStateSet$ is well-defined as the least fixed point of forward exploration from $\SymbStateInit$, and the transitions are deterministic (each pair $(\SymbState, \trans)$ yields exactly one successor), so the structure is uniquely determined by~$G$.

\begin{definition}[zones at a location]
For a location $\loc$ (with $\ClockSet$, $\ParamSet$, and $\SymbStateSet$ implicit from the PZG under discussion), the associated zone set is
\[
\ZonesAt{\loc} \;:=\;
\{\, \zone \in \ZoneSet(\ClockSet, \ParamSet) \mid (\loc,\zone) \in \SymbStateSet \,\}.
\]
\end{definition}

We study PTGs with a \emph{reachability objective} given by a target set $\TargetSet \subseteq \LocSet$.
The controller is considered winning from a state if there exists a memoryless strategy for the controller such that, regardless of how the environment plays, the game eventually reaches a location in $\TargetSet$.
Solving a PTG then means characterizing the parameter valuations for which the controller is winning from the initial state $\SymbState_0$. A detailed definition of winning strategies can be found in~\cite{DBLP:conf/tacas/DahlsenJensenFPP24}.

%% file: abstractions.tex

\label{sec:statespacered}

\subsection{General abstraction framework}
We now formally define a general framework for state space abstractions in PTG and prove its correctness in \cref{sec:correctness}. The remaining subsections instantiate the framework with concrete abstraction functions, recovering known techniques and introducing new ones.

\begin{definition}[abstraction function]
Given a PZG with symbolic state set $\SymbStateSet$, an \emph{abstraction function} is a function
\( \mergefun : \LocSet \to 2^{\ZoneSet(\ClockSet,\ParamSet)} \).
For every location $\loc$, it must over-approximate the original zones:
\[
\forall \zone \in \ZonesAt{\loc}\;
\exists \zone' \in \mergefun(\loc)
    \text{ such that } \zone \subseteq \zone'.
\]
Equivalently, every zone in $\ZonesAt{\loc}$ must be covered by some zone in $\mergefun(\loc)$.
\end{definition}
The simplest instance, which serves as a baseline, is the trivial abstraction function.
\begin{example}
    $\ftriv$ is the abstraction function that leaves the zone set unchanged:
    $$\ftriv(\loc) = \ZonesAt{\loc} $$
\end{example}

To define the reduced PZG, we need to know which abstract zone each original zone maps to. This is captured by the representative map induced by $\mergefun$.
\begin{definition}[representative map]
    Given an abstraction function $\mergefun$, let $rep_{\mergefun}(\loc, \zone)$ be an arbitrary but fixed zone $\zone'\in \mergefun(\loc)$ such that $\zone \subseteq \zone'$ (non-empty by coverage).
    Define the \emph{representative map} as the function taking symbolic states and returning their representative in the reduced symbolic state space:
    $$\repmap{\mergefun}(\loc,\zone) := (\loc, rep_{\mergefun}(\loc,\zone))$$
\end{definition}

\begin{definition}[reduced parametric zone graph]
Let $(\SymbStateSet, \SymbStateInit, \Rightarrow_c^t, \Rightarrow_u^t)$ be a PZG and $\mergefun : \LocSet \to 2^{\ZoneSet(\ClockSet,\ParamSet)}$ an abstraction function with representative map $\repmap{\mergefun}$.
The reduced transition relation for controllable transitions is
$$\Rightarrow_{c,red}^t := \{(\repmap{\mergefun}(\SymbState), \repmap{\mergefun}(\SymbState')) \; \mid \; \SymbState \Rightarrow_c^t \SymbState'\}$$
and analogously for uncontrollable transitions:
$$\Rightarrow_{u,red}^t := \{(\repmap{\mergefun}(\SymbState), \repmap{\mergefun}(\SymbState')) \; \mid \; \SymbState \Rightarrow_u^t \SymbState'\}$$
Together with the abstract symbolic states $\{\repmap{\mergefun}(\SymbState) \mid \SymbState \in \SymbStateSet\}$, these define the \emph{reduced parametric zone graph}, itself a PZG for~$G$:
$$\PZGred{\mergefun} = (\{\repmap{\mergefun}(\SymbState) \mid \SymbState \in \SymbStateSet\},\; \repmap{\mergefun}(\SymbStateInit),\;\Rightarrow_{c,red}^t,\;\Rightarrow_{u,red}^t)$$
\end{definition}

\subsection{PTG Reachability Algorithm}\label{sec:algorithm}
In~\cite{DBLP:conf/wodes/JovanovicFLR12,DBLP:journals/ijcon/JovanovicLR19} a semi-algorithm was introduced and later implemented in~\cite{DBLP:conf/tacas/DahlsenJensenFPP24} that synthesizes the set of parameters for which a winning strategy exists. We recall this algorithm here (\cref{alg:orig}), as it forms the basis for the correctness proofs that follow. For the purposes of this paper, the algorithm is presented with the PZG as an explicit input, rather than deriving the canonical PZG from the PTG internally; this makes the abstraction substitution in \cref{sec:correctness} precise.

We briefly explain the algorithm. The idea is to compute reachable states while propagating winning conditions backward over controllable and uncontrollable transitions. The main loop (\cref{alg:line:mainloop,alg:line:functionCall}) calls \textsc{explore} or \textsc{update} until no states remain. Procedure \textsc{explore} picks a state $\SymbState{}$ from $\WaitingExplore{}$ (\cref{alg:line:explore:pick}) and iterates over its PZG transitions (\crefrange{alg:line:explore:forbegin}{alg:line:explore:forend}): each successor $\SymbState{}'$ records $\SymbState{}$ in $\Depends$ and, if new, is queued for exploration. A target state is marked winning, queuing its predecessors for update (\crefrange{alg:line:explore:beginif}{alg:line:explore:endif}); in any case $\SymbState{}$ is queued for update (\cref{alg:line:explore:toupdate}) and marked explored (\cref{alg:line:explore:explored}).

Procedure \textsc{update} back-propagates winning zones for $\SymbState{}$ from $\WaitingUpdate{}$. It collects in $\Uncontrollable{}$ the predecessors (\Pred) of the non-winning part of the uncontrollable successors (\cref{alg:line:update:uncontrol}), and in $\WinningMove{}$ the predecessors of the winning part of the controllable successors (\cref{alg:line:update:winmove}). The safe temporal predecessors (\SafePred) of $\Win[\SymbState] \cup \WinningMove{}$ that can let time elapse into $\SymbState{}$ while avoiding $\Uncontrollable{}$ (\cref{alg:line:update:safepred}) are winning. On a change to $\Win[\SymbState]$, the predecessors of $\SymbState{}$ are re-queued (\crefrange{alg:line:update:startend}{alg:line:update:finishend}), and the parameter projection at $\SymbState{}_I$ is reported (\cref{line:report}).

\input{orig_algo}

\noindent\textbf{Role of the abstraction.}
The abstraction changes only which symbolic states are explored and how they are grouped, not the game's moves: \textsc{update} applies $\Pred$ and $\SafePred$ to the original guards and resets, so the reduced PZG is only a skeleton carrying the \emph{exact} backward fixed point.
Thus abstract zones never enlarge the winning region (recovered exactly on each original zone, \cref{lem:zone-corr}), and, unlike game abstractions~\cite{DBLP:conf/sas/HenzingerMMR00,DBLP:conf/lics/AlfaroGJ04}, no asymmetric handling of the players is needed: both predecessors are concrete, so coverage alone gives soundness and completeness.

\subsection{Correctness of state space abstractions}\label{sec:correctness}
\input{correctness}

\subsection{Instantiating the framework}\label{sec:instantiations}
The remaining subsections instantiate the framework with concrete abstraction methods.
Each method is specified as a type of \emph{abstraction step}: a local operation that produces a valid abstraction function from an existing one.
This formulation avoids requiring the full zone set $\ZonesAt{\loc}$ (which may be infinite) and matches the incremental nature of on-the-fly exploration.

\begin{definition}[abstraction step]\label{def:abstraction-step}
Let $\mergefun$ be an abstraction function. An \emph{abstraction step} at location $\loc$ produces a new function $\mergefun' : \LocSet \to 2^{\ZoneSet(\ClockSet,\ParamSet)}$ satisfying:
\begin{enumerate}
    \item $\mergefun'(\loc') = \mergefun(\loc')$ for all $\loc' \neq \loc$ \emph{(locality)},
    \item for every $\zone \in \mergefun(\loc)$, there exists $\zone' \in \mergefun'(\loc)$ with $\zone \subseteq \zone'$ \emph{(coarsening)}.
\end{enumerate}
An \emph{abstraction sequence} is a sequence $(\mergefun_i)_{i \geq 0}$ where $\mergefun_0 = \ftriv$ and each $\mergefun_{i+1}$ is obtained from $\mergefun_i$ by an abstraction step.
\end{definition}

\begin{lemma}\label{lem:abstraction-sequence}
Every element of an abstraction sequence is a valid abstraction function.
\end{lemma}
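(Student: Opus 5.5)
The plan is induction on the index $i$, with the coverage property of an abstraction function as the invariant: for every location $\loc$ and every $\zone \in \ZonesAt{\loc}$ there is some $\zone' \in \mergefun_i(\loc)$ with $\zone \subseteq \zone'$. The zone sets $\ZonesAt{\loc}$ are fixed by the underlying PZG and do not change along the sequence, so the only thing to track is how $\mergefun_i$ evolves.

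For the base case $i=0$, we have $\mergefun_0 = \ftriv$, so $\mergefun_0(\loc) = \ZonesAt{\loc}$. Coverage holds by taking $\zone' = \zone$ and using reflexivity of $\subseteq$.

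For the inductive step, assume $\mergefun_i$ satisfies coverage and let $\mergefun_{i+1}$ be obtained from it by an abstraction step at some location $\loc$ (\cref{def:abstraction-step}). Fix an arbitrary location $\loc''$ and a zone $\zone \in \ZonesAt{\loc''}$. The induction hypothesis gives $\zone_1 \in \mergefun_i(\loc'')$ with $\zone \subseteq \zone_1$. There are two cases.
\begin{itemize}
\item If $\loc'' \neq \loc$, locality gives $\mergefun_{i+1}(\loc'') = \mergefun_i(\loc'')$, so $\zone_1$ itself witnesses coverage.
\item If $\loc'' = \loc$, coarsening applied to $\zone_1 \in \mergefun_i(\loc)$ yields $\zone_2 \in \mergefun_{i+1}(\loc)$ with $\zone_1 \subseteq \zone_2$. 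Transitivity of inclusion (zones read as sets of valuations) then gives $\zone \subseteq \zone_2$.
\end{itemize}
Either way $\mergefun_{i+1}$ satisfies coverage, so it is a valid abstraction function.

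There is no real obstacle here. The only care needed is to compose coverage with coarsening correctly, so that the witness at step $i+1$ is taken from $\mergefun_{i+1}(\loc)$ and not from $\mergefun_i(\loc)$. A secondary point is that the lemma is stated for "every element", and each element sits at a finite index, so ordinary induction over $\mathbb{N}$ suffices even when the sequence itself is infinite.
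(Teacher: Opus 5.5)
Your proof is correct and takes the same route as the paper. The paper argues that $\ftriv$ has coverage and that each step preserves it via coarsening and transitivity ($Z \subseteq Z' \subseteq Z''$); you additionally spell out the locality case $\ell'' \neq \ell$ and the induction over finite indices, both of which the paper leaves implicit.
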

\begin{proof}
$\ftriv$ satisfies coverage. Each step preserves coverage: if $\zone \in \ZonesAt{\loc}$ was covered by $\zone' \in \mergefun(\loc)$, then coarsening gives $\zone'' \in \mergefun'(\loc)$ with $\zone \subseteq \zone' \subseteq \zone''$.
\end{proof}

\begin{remark}
For finite PZGs, each abstraction sequence stabilizes after finitely many steps;
the closed-form characterizations given alongside each step type below describe these fixed points.
In general, the result of applying steps may depend on the order in which they are applied, but every abstraction function produced by a step sequence is sound by \cref{thm:correctness}.
\end{remark}

\subsection{Zone inclusion check}
The idea is to prune redundant symbolic states: if two zones $\zone \subseteq \zone'$
appear at the same location, $\zone$ is already covered by $\zone'$ and can be discarded.
We consider two variants depending on whether this check is applied in one or both directions.

\begin{definition}[inclusion step]
Let $\mergefun$ be an abstraction function and $\zone \in \mergefun(\loc)$.
If there exists $\zone' \in \mergefun(\loc)$ with $\zone \neq \zone'$ and $\zone \subseteq \zone'$,
then removing $\zone$ is an \emph{inclusion step}:
$$\mergefun'(\loc) = \mergefun(\loc) \setminus \{\zone\}.$$
\end{definition}
Coarsening is immediate: $\zone$ itself is covered by $\zone'$, and all other zones are unchanged.

\begin{remark}
An inclusion step can be applied in two directions during on-the-fly exploration.
\emph{Forward}: when a newly discovered zone $\zone$ is already subsumed by an existing $\zone' \in \mergefun(\loc)$, discard $\zone$ immediately.
\emph{Backward}: when a newly discovered zone $\zone'$ subsumes an existing $\zone \in \mergefun(\loc)$, retroactively remove $\zone$.
Both are the same formal step; they differ only in which zone is new and which is old.
\end{remark}

When $\ZonesAt{\loc}$ is finite, a maximal sequence of inclusion steps applied in \emph{both} directions (\emph{double inclusion}) stabilizes to the \emph{antichain} of $\subseteq$-maximal zones:
$$\finc(\loc) = \{ \zone \in \ZonesAt{\loc} \mid \nexists\, \zone' \in \ZonesAt{\loc} : \zone \subsetneq \zone'\}.$$
Applying steps in only one direction may stabilize earlier: forward-only never removes an existing zone, and backward-only never discards a new zone subsumed by an already-present one.

\subsection{Zone merging}
Zone merging, introduced for parametric timed automata in~\cite{DBLP:conf/atva/AndreFS13,DBLP:conf/formats/AndreMPP22},
reduces the state space by replacing two zones at the same location with their union,
when that union is already convex (so no new states are introduced).

\begin{definition}[merge step]
Let $\mergefun$ be an abstraction function and $\zone_1, \zone_2 \in \mergefun(\loc)$ where $\zone_m = \zone_1 \cup \zone_2$ is convex~\cite{DBLP:conf/atva/AndreFS13}.
A \emph{merge step} replaces them with $\zone_m$:
$$\mergefun'(\loc) = (\mergefun(\loc) \setminus \{\zone_1, \zone_2\}) \cup \{\zone_m\}.$$
\end{definition}
Coarsening holds: $\zone_1, \zone_2 \subseteq \zone_m = \zone_1 \cup \zone_2$, and all other zones are unchanged.

\subsection{Location-based}
The location-based abstraction is the coarsest instantiation of the framework:
every zone at location $\loc$ is replaced by the invariant $\Inv(\loc)$, yielding
a single abstract state per location.
\begin{definition}[\floc]
$$\floc(\loc) = \{\Inv(\loc)\}$$
\end{definition}
Coverage holds trivially: every symbolic state $(\loc, \zone)$ satisfies
$\zone \subseteq \Inv(\loc)$ by definition, so the single zone $\Inv(\loc)$ covers
all zones at $\loc$.
Unlike the other instantiations, $\floc$ does not require the abstraction step
machinery: the characterization above is well-defined for any PZG,
including infinite ones, since it depends only on the invariant of each location,
not on enumerating the reachable zones.
A similar location-based abstraction is used in~\cite{DBLP:conf/concur/JensenLLS25}
for timed ATL.

\subsection{Hull abstractions}
Unlike zone merging, which only combines zones when their union is already convex,
hull abstractions always collapse zones into a single over-approximating zone, potentially introducing new valuations.
The variants differ in the class of zones they produce, trading precision for computational cost.

\begin{definition}[hull operator]
A \emph{hull operator} is a function $\mathcal{H} : 2^{\ZoneSet(\ClockSet,\ParamSet)} \to
\ZoneSet(\ClockSet,\ParamSet)$ satisfying $\bigcup_{Z \in S} Z \subseteq \mathcal{H}(S)$ for
every finite set of zones $S$.
\end{definition}

\begin{definition}[hull step]
Let $\mergefun$ be an abstraction function, $\zone \in \mergefun(\loc)$ a newly discovered zone, and $H \in \mergefun(\loc)$ an existing zone at $\loc$.
A \emph{hull step} replaces both with the hull of the pair:
$$\mergefun'(\loc) = (\mergefun(\loc) \setminus \{H, \zone\}) \cup \{\mathcal{H}(\{H, \zone\})\}.$$
\end{definition}
Coarsening holds: $H, \zone \subseteq \mathcal{H}(\{H, \zone\})$, and all other zones are unchanged.
Successive hull steps at a location accumulate into a running hull: after processing zones $\zone_1, \ldots, \zone_k$, the abstract zone at $\loc$ is $\mathcal{H}(\{\ldots\mathcal{H}(\{\zone_1, \zone_2\}), \ldots, \zone_k\})$.
The result is order-independent for associative hull operators (convex hull, octagonal hull, box hull).

\subsubsection{Fixed-point characterization (finite PZGs)}
When $\ZonesAt{\loc}$ is finite, any maximal sequence of hull steps at $\loc$ stabilizes to a single zone.
For a hull operator $\mathcal{H}$, the fixed point is:
$$\fhullgen(\loc) := \bigl\{\, \mathcal{H}(\ZonesAt{\loc}) \,\bigr\}.$$

\subsubsection{Instances}
The three principal instances of $\mathcal{H}$, ordered from finest to coarsest, are:
\begin{description}
  \item[$\fhull$ (convex hull)]
    $\mathcal{H}_{\mathrm{conv}}(S)$ is the smallest convex zone containing $\bigcup S$.

  \item[$\focthull$ (octagonal hull)]
    $\mathcal{H}_{\mathrm{oct}}(S)$ is the smallest octagonal zone containing $\bigcup S$.
    Octagonal zones extend difference-bound constraints with terms of the form
    $x_i \pm x_j \leq c$, capturing more diagonal structures than a bounding box.

  \item[$\fboxhull$ (box hull)]
    $\mathcal{H}_{\mathrm{box}}(S)$ is the smallest axis-aligned bounding zone containing
    $\bigcup S$, using only individual-variable bounds $x_i \leq c$ and $x_i \geq c$.
    It is the cheapest to compute but the coarsest.
\end{description}

These satisfy $\mathcal{H}_{\mathrm{conv}}(S) \subseteq \mathcal{H}_{\mathrm{oct}}(S) \subseteq \mathcal{H}_{\mathrm{box}}(S)$
for any $S$, so $\fhull$ is the finest and $\fboxhull$ the coarsest hull abstraction.

%% file: orig_algo.tex

\begin{algorithm}
  \caption{For PTG $\game = (\LocSet, \ClockSet, \ParamSet, \LabelSet, \TransSet_c,
  \TransSet_u, \loc_0, \Inv)$ and PZG $(\SymbStateSet, \SymbStateInit, \Rightarrow_c^t, \Rightarrow_u^t)$, and reachability objective $\TargetSet$,
  returns the set of all parameter valuations that win the game.\label{alg:orig}}
  \begin{algorithmic}[1]
    \State $\Explored, \WaitingUpdate, \WaitingExplore \gets \emptyset, \emptyset, \{\SymbStateInit\}$
    \Comment{Symbolic state sets}
    \State $\mathit{Win} := \{\}$
    \Comment {Map from symbolic states to unions of zones}
    \State $\Depends := \{\}$
    \Comment {Map from symbolic states to sets of symbolic states}
    \State $\WinningParam := False$
    \Statex
    \Function{solvePTG}{}
    \While{$\WaitingExplore \neq \emptyset \lor \WaitingUpdate \neq \emptyset$}
    \label{alg:line:mainloop}
    \State Choose either \Call{explore}{ \hspace*{-1mm}} or  \Call{update}{ \hspace*{-1mm}}
    \label{alg:line:functionCall}\EndWhile
    \State \Return $\WinningParam$
    \EndFunction
    \Statex

    \Procedure{explore}{}
    \State $\SymbState \gets extract(\WaitingExplore)$ \label{alg:line:explore:pick}
    \For{$\SymbState \Rightarrow^{\trans} \SymbState'$ in PZG :} \label{alg:line:explore:forbegin}
    \State $\Depends[\SymbState'] \gets \Depends[\SymbState'] \cup \{\SymbState\}$
    \If{$\SymbState'$ not in $\Explored$}
    \State $\WaitingExplore \gets \WaitingExplore \cup \{\SymbState'\}$
    \EndIf
    \EndFor \label{alg:line:explore:forend}
    \If{$\SymbState.\loc \in \TargetSet$} \label{alg:line:explore:beginif}
    \State $\mathit{Win}[\SymbState] \gets \SymbState.\zone$
    \State $\WaitingUpdate \gets \WaitingUpdate \cup \Depends[\SymbState]$\label{alg:line:explore:winupdate}
    \EndIf\label{alg:line:explore:endif}
    \State $\WaitingUpdate \gets \WaitingUpdate \cup \{\SymbState\}$ \label{alg:line:explore:toupdate}
    \State $\Explored \gets \Explored \cup \{\SymbState\}$  \label{alg:line:explore:explored}
    \EndProcedure
    \Statex
    \Procedure{update}{}
    \State $\SymbState \gets extract(\WaitingUpdate)$
    \State $\Uncontrollable \gets \hspace{-1em}\underset{\{(\SymbState', t) \mid  \SymbState
    \Rightarrow^{t}_u\SymbState'\}}{\bigcup}\hspace{-1em} Pred (t, \SymbState' \setminus
    \mathit{Win}[\SymbState'])$\label{alg:line:update:uncontrol}
    \State $\WinningMove \gets \hspace{-1em}\underset{\{(\SymbState', t) \mid  \SymbState
    \Rightarrow^{t}_c\SymbState'\}}{\bigcup}\hspace{-1em} Pred (t, \mathit{Win}[\SymbState'])$\label{alg:line:update:winmove}
    \State $\NewWin := \SafePred( Win[\SymbState] \cup \WinningMove, \: \Uncontrollable) \:  \cap \: \SymbState$\label{alg:line:update:safepred}
    \If{$\NewWin \not\subseteq \mathit{Win}[\SymbState]$}\label{alg:line:update:startend}
    \State $\WaitingUpdate \gets \WaitingUpdate \cup \Depends[\SymbState]$
    \State $\mathit{Win}[\SymbState] \gets  \mathit{Win}[\SymbState] \cup \NewWin$
    \EndIf\label{alg:line:update:finishend}
    \State $\WinningParam \gets \ProjectParam{(\mathit{Win}[\SymbState] \cap \SymbState_0)}$ \label{line:report}
    \EndProcedure
  \end{algorithmic}
\end{algorithm}

%% file: correctness.tex
The goal of this section is to show that the algorithm (\cref{alg:orig}) produces the same winning parameter set whether it receives the canonical PZG or the reduced PZG $\PZGred{\mergefun}$ as input.
The argument rests on two properties.
First, every original zone is contained in an abstract zone (coverage requirement).
Second, PZG transitions are \emph{image-preserving}: for any symbolic state $(\loc,\zone)$ and transition $\trans$ from it with successor zone $\zone_{\mathrm{img}}$, every $v \in \zone$
where $\trans$ is enabled satisfies $v[Y:=0] \in \zone_{\mathrm{img}}$,
since $\zone_{\mathrm{img}}$ is by definition the set of images of $\trans$
applied to $\zone$.
In particular, states in $\zone$ always land in $\zone_{\mathrm{img}}$, not
in the possibly larger abstract successor
$\zone'_{\mathrm{img}} \supseteq \zone_{\mathrm{img}}$.

Throughout this section, whenever a transition
$\trans = (\loc, g, a, Y, \loc')$ appears, we write $g$, $Y$, and
$\loc'$ for its guard, reset clocks, and target location; subscripted variants
($g_u, Y_u, \loc'_u$ for $\trans_u$, etc.)\ follow the same convention.

We write $\Win^*[(\loc,\zone)]$ for the set of valuations eventually marked
winning at symbolic state $(\loc,\zone)$ in the canonical PZG (under fair
exploration), and $\Win^*_\alpha[(\loc,\zone')]$ for the corresponding set
computed on $\PZGred{\mergefun}$.

\begin{figure}[t]
\centering
\begin{tikzpicture}[>=stealth', font=\small, scale=0.95]
  \colorlet{abstractfill}{blue!8}
  \colorlet{winfill}{green!30}
  \colorlet{winfilldim}{green!14}

  \node[above, font=\normalsize] at (0, 1.22) {$\loc$};

  \fill[abstractfill] (0,0) ellipse (1.65 and 1.22);
  \draw (0,0) ellipse (1.65 and 1.22);
  \node[anchor=south west, font=\scriptsize] at (0.85, 0.85) {$\zone'$};

  \fill[white] (-0.3,-0.18) ellipse (0.76 and 0.56);

  \begin{scope}
    \clip (-0.3,-0.18) ellipse (0.76 and 0.56);
    \fill[winfill] (-0.52, 0.09) ellipse (0.46 and 0.33);
  \end{scope}

  \draw (-0.3,-0.18) ellipse (0.76 and 0.56);
  \node[anchor=north, font=\scriptsize] at (-0.3, -0.74) {$\zone$};

  \node[anchor=east, font=\scriptsize] at (-1.08, 0.09) {$\Win^*[(\loc,\zone)]$};
  \draw[->, very thin, gray!70] (-1.08, 0.09) -- (-0.76, 0.09);

  \node[above, font=\normalsize] at (5.5, 1.22) {$\loc'$};

  \fill[abstractfill] (5.5, 0) ellipse (1.65 and 1.22);
  \draw (5.5, 0) ellipse (1.65 and 1.22);
  \node[anchor=south west, font=\scriptsize] at (6.3, 0.85) {$\zone'_{\mathrm{img}}$};

  \begin{scope}
    \clip (5.5, 0) ellipse (1.65 and 1.22);
    \fill[winfilldim] (5.78, 0.10) ellipse (0.85 and 0.58);
  \end{scope}

  \fill[white] (5.18,-0.18) ellipse (0.76 and 0.56);

  \begin{scope}
    \clip (5.18,-0.18) ellipse (0.76 and 0.56);
    \fill[winfill] (5.78, 0.10) ellipse (0.85 and 0.58);
  \end{scope}

  \draw (5.18,-0.18) ellipse (0.76 and 0.56);
  \node[anchor=north, font=\scriptsize] at (5.18, -0.74) {$\zone_{\mathrm{img}}$};

  \node[anchor=west, font=\scriptsize] at (7.2, 0.52)
    {$\Win^*_\alpha[(\loc',\zone'_{\mathrm{img}})]$};
  \draw[->, very thin, gray!70] (7.2, 0.52) -- (6.62, 0.38);

  \node[anchor=west, font=\scriptsize] at (7.2, -0.15)
    {$\Win^*[(\loc',\zone_{\mathrm{img}})]$};
  \draw[->, very thin, gray!70] (7.2, -0.15) -- (5.90, 0.02);

  \node[font=\scriptsize, align=center] at (5.5, -1.45)
    {$\Win^*_\alpha[(\loc',\zone'_{\mathrm{img}})] \cap \zone_{\mathrm{img}}
      = \Win^*[(\loc',\zone_{\mathrm{img}})]$};

  \draw[->, thick] (1.78, 0.55) -- (3.62, 0.55) node[midway, above] {$t$};

  \draw[->, dashed, gray!65]
    (-0.02, -0.02) to[out=-12, in=192] (4.62, -0.06);
  \draw[->, dashed, gray!65]
    (-0.32, -0.52) to[out=-8,  in=188] (4.60, -0.42);
  \draw[->, dashed, gray!65]
    (-0.72, -0.08) to[out=-15, in=195] (4.56, -0.18);

  \node[font=\scriptsize, gray, anchor=north] at (2.7, -0.62)
    {$t(\zone) \subseteq \zone_{\mathrm{img}}$};

\end{tikzpicture}
\caption{Consequence of Lemma~\ref{lem:zone-corr}.
  $\Win^*_\alpha[(\loc',\zone'_{\mathrm{img}})]$ (light and dark green) extends
  into the region $\zone'_{\mathrm{img}} \setminus \zone_{\mathrm{img}}$,
  but states from $\zone$ map only into $\zone_{\mathrm{img}}$ (image-preserving,
  dashed arrows).
  Restricted to $\zone_{\mathrm{img}}$, the abstract and full winning sets
  coincide (dark green), so the predecessor computation restricted to $\zone$
  recovers exactly $\Win^*[(\loc,\zone)]$.}
\label{fig:zone-corr}
\end{figure}

\begin{lemma}[Zone correspondence]\label{lem:zone-corr}
  Let $\mergefun$ be an abstraction function.
  For every abstract symbolic state $(\loc, \zone') \in \PZGred{\mergefun}$ and
  every original symbolic state $(\loc, \zone)$ in the canonical PZG with
  $\repmap{\mergefun}(\loc,\zone) = (\loc,\zone')$:
  \begin{equation}\label{eq:zone-corr}
    \Win^*_\alpha[(\loc, \zone')] \;\cap\; \zone \;=\; \Win^*[(\loc, \zone)].
  \end{equation}
\end{lemma}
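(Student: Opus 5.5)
We characterize both sides of \eqref{eq:zone-corr} as least fixed points and compare these fixed points rather than the executions of \cref{alg:orig}. Under fair exploration, $\Win^*$ on the canonical PZG is the least map $W$ from symbolic states to sets of valuations such that $W[(\loc,\zone)] = \zone$ when $\loc\in\TargetSet$, and otherwise
\[
W[\SymbState] = \SafePred\bigl(W[\SymbState]\cup \textstyle\bigcup_{\SymbState\Rightarrow^t_c\SymbState'}\Pred(t,W[\SymbState']),\ \bigcup_{\SymbState\Rightarrow^t_u\SymbState'}\Pred(t,\SymbState'\setminus W[\SymbState'])\bigr)\cap \SymbState .
\]
The map $\Win^*_\alpha$ is the analogous least fixed point over $\PZGred{\mergefun}$. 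Both fixed points are reached as limits of monotone iterations: each \textsc{update} only enlarges $\Win$, and fairness ensures every state whose right-hand side changed is re-queued via $\Depends$. We would state this as a preliminary observation.

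We then relate the two fixed points through the \emph{semantic} winning region. Let $\mathcal{W}_\loc\subseteq \Inv(\loc)$ denote the set of valuations from which the controller wins the concrete game starting in location $\loc$. The claim to establish is that for every symbolic state $(\loc,Z)$ of either graph, restricted to valuations reachable in that graph, the computed winning set equals $\mathcal{W}_\loc\cap Z$. Intersecting with $\zone\subseteq\zone'$ then gives the lemma directly:
\[
\Win^*_\alpha[(\loc,\zone')]\cap\zone = \mathcal{W}_\loc\cap\zone = \Win^*[(\loc,\zone)].
\]
Two properties drive this claim. The first is image preservation: for $v\in Z$ with $v\models g$ we have $v[Y:=0]$ in the successor zone, and the time-successor closure stays within that zone up to the invariant. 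The second is that $\Pred$ and $\SafePred$ use the exact guards and resets. Together they mean that every concrete move available from $v\in Z$ is represented by an outgoing PZG edge whose target zone contains the image of $v$. The abstract graph inherits this property: if $(\loc,\zone)\Rightarrow^t(\loc'',\zone_{\mathrm{img}})$, then $\repmap{\mergefun}$ yields the edge $(\loc,\zone')\Rightarrow^t_{red}(\loc'',\zone'_{\mathrm{img}})$ with $\zone_{\mathrm{img}}\subseteq\zone'_{\mathrm{img}}$, by coverage.

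Each inclusion is proved separately. For $\subseteq$, we show by induction on the fixed-point iteration that every computed winning valuation is semantically winning. The step case unfolds \SafePred: the valuation delays while avoiding \Uncontrollable, which contains every uncontrollable escape to a non-winning point of some successor zone, and then it either lies in an earlier winning set or takes a controllable edge into a winning set. For $\supseteq$, we use the rank of a semantically winning state, meaning the number of discrete steps that an attractor-style memoryless strategy needs to reach $\TargetSet$, and induct on that rank. At rank $k$, the delay and controllable move prescribed by the strategy land in a successor zone at rank $k-1$, which is already winning by induction. Every uncontrollable move taken during the delay lands in a semantically winning valuation of some successor zone, and therefore outside $\Pred(t,\SymbState'\setminus W)$.

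The main obstacle is the $\subseteq$ direction for abstract zones. The set $\zone'$ may contain valuations that are unreachable in the concrete game, and at those valuations the abstract computation could in principle mark points as winning through abstract edges that no concrete state of $\zone'$ actually enables. The concern is that this spurious winning might flow back into $\zone$. We would handle it by strengthening the invariant to all valuations: both graphs evaluate $\Pred$ and $\SafePred$ with exact semantics, and every concrete move from any $v\in\zone'$ satisfying $g$ lands in $\Succ(t,\zone')\subseteq$ the abstract successor. That last inclusion has to be checked, and we expect it to be delicate, because in general $\Succ(t,\zone')$ need not lie inside $\repmap{\mergefun}(\Succ(t,\zone))$. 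If the inclusion fails, the fallback is to track only valuations forward-reachable from $\zone$, restricting the induction to the sub-game induced by $\zone$ and using image preservation to show that this sub-game never leaves the original images.
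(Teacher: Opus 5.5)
\textbf{Verdict: genuine gap in the $\subseteq$ direction, which you leave unresolved. Neither of your two routes closes it, and in fact the lemma as stated needs a hypothesis beyond coverage.}

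\textbf{Where the danger actually lies.} Your architecture matches the paper's proof. Both sides are compared with the semantic region $\mathcal{W}_{\loc}$. Soundness is an induction on the order in which valuations are added, completeness an induction on the discrete distance, and reduced edges are lifted from canonical ones through $\repmap{\mergefun}$. The danger is not quite where you place it. Controllable moves are harmless, because $\Pred$ uses the real guard and reset. The problem is that $\Uncontrollable$ at $(\loc,\zone')$ is assembled only from edges inherited from original zones $\zone_i$ with $\repmap{\mergefun}(\loc,\zone_i)=(\loc,\zone')$, and only through their abstract targets. So an environment escape from a valuation of $\zone'$ that lies in no such $\zone_i$ can be missing, even when that valuation is concretely reachable.

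\textbf{Why the paper's argument and your fallback both fail.} The paper takes your first route. It asserts $\Win^*_\alpha[(\loc,\zone')]\subseteq\mathcal{W}_{\loc}$ because abstract transitions inherit genuine guards and resets. That shows the moves used are real, but not that every uncontrollable escape during the delay is caught by $\Uncontrollable$, so your suspicion is justified. Your fallback fails because the update at $(\loc,\zone')$ is not confined to the sub-game of $\zone$. $\WinningMove$ ranges over all abstract successors of $(\loc,\zone')$, including those inherited from other original zones. Image preservation says where the image of $v\in\zone$ lies, not in which abstract zone the algorithm looks it up.

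\textbf{Counterexample.} Take clocks $x,y$ and all invariants $\top$.
\begin{itemize}
\item Transitions $a$ (no guard, no reset) and $b$ (guard $x=1$, reset $x$) lead from $\loc_0$ to $\loc_1$, giving $\zone_q=\{x=y\}$ and $\zone_p=\{y-x=1\}$.
\item A controllable $c:\loc_1\to\loc_2$ without guard or reset gives $\zone_q^2,\zone_p^2$ with the same constraints.
\item From $\loc_2$, an uncontrollable $u$ with guard $y-x\geq 1$ goes to a sink. A controllable $w$ with guard $x\geq 2$, resetting both clocks, goes to the target.
\end{itemize}
Canonically, $u$ is enabled everywhere on $\zone_p^2$, so $\Win^*[(\loc_2,\zone_p^2)]=\emptyset$ and hence $\Win^*[(\loc_1,\zone_p)]=\emptyset$.

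Now let $\mergefun(\loc_1)=\{\top\}$ and $\mergefun(\loc_2)=\{\top,\zone_p^2\}$. Take $\repmap{\mergefun}(\loc_2,\zone_q^2)=(\loc_2,\top)$, $\repmap{\mergefun}(\loc_2,\zone_p^2)=(\loc_2,\zone_p^2)$, and the identity elsewhere; coverage holds.
\begin{itemize}
\item $(\loc_2,\top)$ inherits only the edges of $\zone_q^2$, where $u$ is disabled. Its $\Uncontrollable$ is empty and $\Win^*_\alpha[(\loc_2,\top)]=\top$.
\item $(\loc_1,\top)$ has empty $\Uncontrollable$ and $\WinningMove\supseteq\Pred(c,\top)=\top$. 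Hence $\Win^*_\alpha[(\loc_1,\top)]\cap\zone_p=\zone_p\neq\emptyset$.
\end{itemize}
A variant arises from convex-hull steps alone. Add a third zone $\{y-x=2\}$ (guard $x=2$, reset $x$) and guard $u$ by $y-x=1$. Hull everything at $\loc_1$, hull $\zone_q^2$ with $\{y-x=2\}$ into $H\supseteq\zone_p^2$, and let $\zone_p^2$ represent itself. In particular your headline claim fails at $(\loc_2,\top)$, which marks the concretely reachable valuations of $\zone_p^2$ as winning.

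\textbf{A hypothesis that suffices.} Require (C): every valuation of an abstract zone $\zone''$ at $\loc''$ lying in $O(\loc''):=\bigcup\ZonesAt{\loc''}$ belongs to some $\zone_i\in\ZonesAt{\loc''}$ with $\repmap{\mergefun}(\loc'',\zone_i)=(\loc'',\zone'')$. It holds for $\floc$, for the hull fixed points $\fhullgen$, for $\finc$, and for merging with the natural choice of representatives.

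Under (C), state your invariant for every abstract state and restrict it to $O$: $\Win_\alpha[(\loc'',\zone'')]\cap O(\loc'')\subseteq\mathcal{W}_{\loc''}$. The induction then closes. A delay keeps a point inside its original zone, discrete images of points of $O$ land in $O$, and (C) guarantees that the edges of that original zone are present at $(\loc'',\zone'')$.

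\textbf{The $\supseteq$ direction needs the same strengthening.} The paper's completeness argument has the same blind spot. To keep $v+\delta'$ out of $\Uncontrollable$, its image must be winning in every abstract $t_u$-successor of $(\loc,\zone')$ that contains it, not only in $\repmap{\mergefun}(\loc'_u,\zone^u_{\mathrm{img}})$. So the rank induction must also range over all abstract states and use (C) to locate the edge of the winning move.
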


\begin{proof}
  We show both inclusions separately, using the correctness theory
  of~\cite{DBLP:conf/tacas/DahlsenJensenFPP24}.
  For each location $\loc$, write $W(\loc)$ for the set of valuations
  from which the controller has a winning strategy at~$\loc$.
  By~\cite{DBLP:conf/tacas/DahlsenJensenFPP24},
  the PTG algorithm is sound at every step (Invariant~4: every state
  marked winning is genuinely winning) and complete in the limit
  (Theorem~3: under fair exploration, every winning state is eventually
  discovered).
  In particular, $\Win^*[(\loc,\zone)] = W(\loc) \cap \zone$ for every
  symbolic state $(\loc,\zone)$ in the canonical PZG.

  \medskip\noindent\textbf{Soundness}
  ($\Win^*_\alpha[(\loc,\zone')] \cap \zone \subseteq \Win^*[(\loc,\zone)]$).
  We show that every valuation added to $\Win_\alpha[(\loc,\zone')]$ is
  genuinely winning at~$\loc$, by induction on the order in which
  valuations are added.
  At target locations ($\loc \in \TargetSet$), the algorithm initializes
  $\Win_\alpha[(\loc,\zone')] \gets \zone'$; every $v \in \zone'$ at a
  target location is winning by definition.
  At non-target locations, a valuation $v$ is added only through the
  $\SafePred$ computation, which requires:
  (i)~a controllable transition $\trans$ and a delay $\delta \geq 0$ such that
  $v + \delta \models g$, with $(v+\delta)[Y\!:=\!0]$ in a set already marked winning;
  and (ii)~every uncontrollable successor during $[0,\delta]$ is also in a
  set already marked winning.
  Since abstract transitions inherit their guards and resets from genuine
  PZG transitions, conditions (i) and (ii) correspond to real game moves.
  By the inductive hypothesis, the successor sets contain only genuinely
  winning valuations, so $v$ is genuinely winning.
  Therefore, $\Win^*_\alpha[(\loc,\zone')] \subseteq W(\loc)$.
  Intersecting with $\zone \subseteq \zone'$:
  $\Win^*_\alpha[(\loc,\zone')] \cap \zone \subseteq W(\loc) \cap \zone
  = \Win^*[(\loc,\zone)]$.

  \medskip\noindent\textbf{Completeness}
  ($\Win^*[(\loc,\zone)] \subseteq \Win^*_\alpha[(\loc,\zone')] \cap \zone$).
  Let $v \in \Win^*[(\loc,\zone)] = W(\loc) \cap \zone$.
  Since $\zone \subseteq \zone'$ (coverage), $v \in \zone'$, so $v$ is present
  in the abstract symbolic state $(\loc,\zone')$.
  We verify the completeness argument
  of~\cite{DBLP:conf/tacas/DahlsenJensenFPP24} (Theorem~3)
  for $\PZGred{\mergefun}$, by induction on the discrete distance
  $n(\loc, v)$ to the target: the minimum, over winning strategies
  from $(\loc, v)$, of the maximum number of discrete transitions
  along any play consistent with that strategy.

  \emph{Base case ($n = 0$):}
  $v$ is at a target location $\loc \in \TargetSet$.
  The algorithm sets $\Win_\alpha[(\loc,\zone')] \gets \zone'$ upon exploration,
  so $v \in \zone' \subseteq \Win_\alpha[(\loc,\zone')]$.

  \emph{Inductive step ($n > 0$):}
  The winning strategy from $v$ waits time $\delta \geq 0$ and then takes a
  controllable transition $\trans$ to location $\loc'$, reaching state
  $v' = (v+\delta)[Y\!:=\!0]$ with $n(\loc', v') < n$.
  We show in two parts that $v$ is eventually added to
  $\Win_\alpha[(\loc,\zone')]$.

  \emph{(a) Lifting successors to the abstract graph.}
  Since $v \in \zone \subseteq \zone'$, every original PZG transition
  from $(\loc,\zone)$ lifts to an abstract transition from $(\loc,\zone')$
  whose successor zone contains the original.
  For the controllable transition $\trans$:
  $(\loc,\zone) \Rightarrow_c^\trans (\loc', \zone_{\mathrm{img}})$
  gives $(\loc,\zone') \Rightarrow_{c,red}^\trans (\loc', \zone'_{\mathrm{img}})$
  with $\zone'_{\mathrm{img}} = \repmap{\mergefun}(\loc', \zone_{\mathrm{img}})
  \supseteq \zone_{\mathrm{img}}$.
  Since $v + \delta \in \zone$ (PZG zones include time successors) and
  $v + \delta \models g$, we have
  $v' = (v+\delta)[Y\!:=\!0] \in \zone_{\mathrm{img}} \subseteq
  \zone'_{\mathrm{img}}$ (image-preserving).
  As $n(\loc', v') < n$, the inductive hypothesis gives that $v'$ is eventually
  in $\Win_\alpha[(\loc', \zone'_{\mathrm{img}})]$.
  For each uncontrollable transition $\trans_u$ enabled from
  $v + \delta'$ ($\delta' \in [0,\delta]$), the same lifting applies:
  the concrete successor $v'' = (v+\delta')[Y_u\!:=\!0]$ reaches
  location $\loc'_u$ and lies in the original successor zone
  $\zone_{\mathrm{img}}^u$ (image-preserving), which lifts to an abstract successor
  $\zone'^u_{\mathrm{img}} \supseteq \zone_{\mathrm{img}}^u$.
  Since $v$ is genuinely winning, $v''$ has distance
  $n(\loc'_u, v'') \leq n - 1$ (the uncontrollable move consumes one
  discrete transition), so by the inductive hypothesis $v''$ is eventually
  in $\Win_\alpha$ at the corresponding abstract successor.

  \emph{(b) Backward propagation.}
  Once all successors are resolved, the algorithm's update step
  (\cref{alg:line:update:safepred}) at $(\loc,\zone')$ computes
  $\NewWin = \SafePred(\Win_\alpha[(\loc,\zone')] \cup \WinningMove,\;
  \Uncontrollable) \cap \zone'$, where:
  \begin{itemize}
  \item \emph{Controllable:}
    \[
    \WinningMove = \!\!\bigcup_{(\loc,\zone') \Rightarrow_{c,red}^{\trans_c}
    (\loc'_c,\zone'_c)}\!\! \Pred(\trans_c,\, \Win_\alpha[(\loc'_c,\zone'_c)]).
    \]
    Since $v' \in \Win_\alpha[(\loc', \zone'_{\mathrm{img}})]$ and
    $v + \delta \models g$ with $(v+\delta)[Y\!:=\!0] = v'$, we get
    $v + \delta \in \Pred(\trans,\, \Win_\alpha[(\loc',\zone'_{\mathrm{img}})])
    \subseteq \WinningMove$.

  \item \emph{Uncontrollable:}
    \[
    \Uncontrollable = \!\!\bigcup_{(\loc,\zone') \Rightarrow_{u,red}^{\trans_u}
    (\loc'_u,\zone'_u)}\!\! \Pred(\trans_u,\, \zone'_u \setminus
    \Win_\alpha[(\loc'_u,\zone'_u)]).
    \]
    Since every uncontrollable successor from the waiting interval
    $[0,\delta]$ is now in $\Win_\alpha$ at its abstract successor,
    $v + \delta' \notin \Uncontrollable$ for all $\delta' \in [0,\delta]$.
  \end{itemize}
  With both conditions met: $v + \delta \in \WinningMove$ and
  $v + \delta' \notin \Uncontrollable$ for $\delta' \in [0,\delta]$.
  By definition of $\SafePred$, this gives $v \in \NewWin$, so
  $v$ is added to $\Win_\alpha[(\loc,\zone')]$.
  Under fair exploration, this update happens in finite time.

  \medskip
  Combining both directions gives
  $\Win^*_\alpha[(\loc,\zone')] \cap \zone = \Win^*[(\loc,\zone)]$.
\end{proof}

\begin{remark}[Consequence]
  \Cref{lem:zone-corr} says that merging zones loses no precision on the states
  that were actually there: $\Win^*_\alpha[(\loc,\zone')]$ restricted to the
  original zone $\zone \subseteq \zone'$ recovers exactly
  $\Win^*[(\loc,\zone)]$ (see \cref{fig:zone-corr}).
\end{remark}

\begin{theorem}[Correctness of reduced PZG]\label{thm:correctness}
  Let $\mergefun$ be an abstraction function satisfying the coverage requirement.
  Under fair exploration, running the PTG algorithm on $\PZGred{\mergefun}$
  converges to the same winning parameter set as running it on the canonical PZG.
  In particular, if either algorithm terminates, both yield the same result;
  if the PZG is infinite, every winning parameter is eventually discovered
  by the reduced algorithm.
\end{theorem}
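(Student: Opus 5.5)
The plan is to reduce the theorem to \cref{lem:zone-corr} applied at the initial symbolic state, since the reported parameter set is $\ProjectParam{(\Win[\SymbStateInit] \cap \SymbState_0)}$ (\cref{line:report}). Write $\SymbStateInit = \TempSucc{\SymbState_0}$ for the canonical initial state and $\repmap{\mergefun}(\SymbStateInit) = (\loc_0, \zone'_I)$ for the initial state of $\PZGred{\mergefun}$. By coverage, $\SymbStateInit.\zone \subseteq \zone'_I$. In particular every initial state $(\loc_0,(\vec 0,\val_\ParamSet))$ lies in both the canonical and the abstract initial zone, because $\SymbState_0 \subseteq \TempSucc{\SymbState_0}$.

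First I apply \cref{lem:zone-corr} with $(\loc,\zone) = \SymbStateInit$ and $(\loc,\zone') = (\loc_0,\zone'_I)$. This gives
\[
\Win^*_\alpha[(\loc_0,\zone'_I)] \cap \SymbStateInit.\zone = \Win^*[\SymbStateInit].
\]
Intersecting both sides with $\SymbState_0$, and using $\SymbState_0 \subseteq \SymbStateInit.\zone$, yields
\[
\Win^*_\alpha[(\loc_0,\zone'_I)] \cap \SymbState_0 = \Win^*[\SymbStateInit] \cap \SymbState_0.
\]
Projecting onto parameters shows that the limiting values of $\WinningParam$ coincide for both inputs. Strictly, line~\ref{line:report} overwrites $\WinningParam$ at every update. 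I would read the reported value as the one computed at the initial state, which is monotone because $\Win$ only grows. If needed, I would observe that the final update of the initial state fixes this value.

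Next I treat termination. If either run terminates, no pending updates remain, so its $\Win$ map is a fixed point of the update equations. I then need that this terminal map equals the limit $\Win^*$ or $\Win^*_\alpha$. Soundness holds at every step, and completeness holds in the limit. Completeness is argued by induction on the distance $n$, and each step of that induction uses only updates that are enqueued via $\Depends$ whenever a successor's $\Win$ changes. So a terminated run has already performed all of them, and its final $\Win$ equals the limit. Combined with the equality above, both runs yield the same set.

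For infinite PZGs, I fix a winning parameter valuation $\val_\ParamSet$. Its initial state is winning with some finite distance $n$, so by the completeness half of \cref{lem:zone-corr} it enters $\Win_\alpha[(\loc_0,\zone'_I)]$ after finitely many fair steps, and it is then reported.

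I expect the termination argument to be the main obstacle. The step most likely to need care is that a terminated run has reached the limit. My first approach would be a quiescence invariant: every symbolic state whose successor's $\Win$ changed is in $\WaitingUpdate$. From this invariant, emptiness of the queues implies closure under the backward step. Closure plus the induction on $n$ then gives equality with the limit.
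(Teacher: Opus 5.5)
Your proposal is correct and follows essentially the paper's route: apply \cref{lem:zone-corr} at the initial symbolic state, intersect both sides with the zero-clock initial zone (using $\SymbState_0 \subseteq \TempSucc{\SymbState_0}$), and project onto parameters. Your extra quiescence-invariant argument for terminating runs goes beyond the paper, which simply invokes limit-correctness of the canonical algorithm from prior work; it is not strictly needed, since for a terminating run $\Win^*$ is by definition its final map, and the completeness half of the lemma already guarantees that every winning valuation is marked by then.
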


\begin{proof}
  Recall $\SymbState_0 = (\loc_0, \zone_{\mathrm{init}})$
  (all clocks zero, parameters unconstrained), and let
  $(\loc_0, \zone_0) = \SymbStateInit = \TempSucc{\SymbState_0}$ be the initial symbolic state
  of the canonical PZG.
  Let $(\loc_0, \zone'_0) = \repmap{\mergefun}(\loc_0, \zone_0)$ be its abstract
  counterpart.
  By the coverage requirement, $\zone_0 \subseteq \zone'_0$.
  By Lemma~\ref{lem:zone-corr} applied at $(\loc_0, \zone_0)$:
  \begin{equation*}
    \Win^*_\alpha[(\loc_0, \zone'_0)] \cap \zone_0 = \Win^*[(\loc_0, \zone_0)].
  \end{equation*}
  Since $\zone_{\mathrm{init}} \subseteq \zone_0$ (the zero-clock zone is contained in
  its time-successor closure):
  \begin{equation*}
    \Win^*_\alpha[(\loc_0, \zone'_0)] \cap \zone_{\mathrm{init}}
    = \Win^*[(\loc_0, \zone_0)] \cap \zone_{\mathrm{init}}.
  \end{equation*}
  Projecting onto parameters:
  \begin{equation*}
    \ProjectParam{\bigl(\Win^*_\alpha[(\loc_0, \zone'_0)] \cap \zone_{\mathrm{init}}\bigr)}
    = \ProjectParam{\bigl(\Win^*[(\loc_0, \zone_0)] \cap \zone_{\mathrm{init}}\bigr)}.
  \end{equation*}
  By the limit-correctness of the PTG algorithm on the canonical
  PZG~\cite{DBLP:conf/tacas/DahlsenJensenFPP24},
  the right-hand side equals the true winning parameter set.
  Since the left-hand side equals the right-hand side, the reduced algorithm
  converges to the same set.
\end{proof}

%% file: eval.tex
\label{sec:expe}
We evaluate the practical impact of the abstractions from \cref{sec:instantiations}, focusing on state-space reduction and scalability on previously intractable instances.

\subsection{Implementation}

The PTG synthesis algorithm of~\cite{DBLP:conf/tacas/DahlsenJensenFPP24} (\cref{alg:orig}) is implemented in \imitator~\cite{DBLP:conf/cav/Andre21}, a model checker supporting a wide range of PTA synthesis algorithms, with zones represented using the Parma Polyhedra Library~\cite{DBLP:journals/scp/BagnaraHZ08}.
We extend \imitator; the source code is available on GitHub.\footnote{\url{https://github.com/imitator-model-checker/imitator}, branch \texttt{develop}.}
Our extensions are threefold:

\textbf{Abstraction functions.}
The baseline without subsumption (\textsf{Baseline}) and zone inclusion (\textsf{Inc}) were already supported by~\cite{DBLP:conf/tacas/DahlsenJensenFPP24}.
We add double inclusion (\textsf{DInc}),
on-the-fly merge (\textsf{Mrg}), location-based abstraction (\textsf{Loc}),
and the convex (\textsf{CH}), octagonal (\textsf{Oct}) and box (\textsf{Box}) hull variants,
as described in \cref{sec:instantiations}.
We also implement two hybrid configurations (\textsf{OctH}, \textsf{BoxH})
that apply convex hull while zones remain below a constraint-count threshold,
falling back to octagonal or box hull when they grow more complex.

\textbf{Layer-based exploration with merge.}
We adopt the layer-based exploration strategy of~\cite{DBLP:conf/formats/AndreMPP22},
in which zones are explored layer by layer with abstraction applied between layers.

\textbf{Data-structure propagation.}
The main PTG-specific adaptation is propagating the game-theoretic data
structures across abstractions.
When a zone replaces another, the maps $\Win$ and $\Depends$ must be updated
by redirecting entries for the absorbed zone to its target,
combining values with zone union and set union respectively.

\subsection{Case Study: Adversarial IoT Scheduler}
\input{case_study.tex}

\subsection{Experimental Setup}
All experiments were conducted on a single core of a computer with an Intel Core
i5-1135G7 CPU @ 2.40GHz with 16 GB of RAM running Ubuntu 22.04.5 LTS.
Each configuration was run 5 times; we report the average time.
A timeout of 1{,}200\,s was applied per run.

We evaluate the abstractions described in \cref{sec:instantiations} using two case studies:
the adversarial IoT scheduler introduced in \cref{sec:casestudy}, and a production-cell benchmark~\cite{ProdCellCaseStudy}.
The production cell models a two-armed robot transferring plates between an input conveyor, a press, and an output conveyor.
The $n$-plate goal is for all plates to reach the output; the game is lost if two plates share the buffer.
A parameter sets the minimum wait between plates; robot timing is uncontrollable.

Each Malicious Synchronization instance is denoted~$(k/n)$, where $n$ is the number of devices and $k$ the congestion limit; Production Cell instances are numbered by plate count.
We compare the ten configurations introduced above, referring to each by its shorthand name.

\subsection{Results}

\begin{table}[t]
    \centering
    \caption{Experimental results. Each cell shows time in seconds followed by number of explored states. \timeout{} denotes timeout at 1{,}200\,s.}
    \label{tab:results}
    \resizebox{\textwidth}{!}{\input{table_results}}
\end{table}

The results are shown in \cref{tab:results}.
All abstractions consistently reduce the number of explored states
compared to \textsf{Baseline}, often by an order of magnitude.
All hull methods and \textsf{Loc} converge to identical state counts on several instances (e.g., 212 on~(2/2) and 1{,}438 on~(3/3)), suggesting they reach the same minimal reduced graph at those sizes.
\textsf{DInc} consistently improves over \textsf{Inc}, roughly halving the state count on most instances, but does not match the reductions achieved by \textsf{Mrg} or the hull methods.
Notably, our new abstractions solve instances intractable for the
prior technique (\textsf{Inc}): \textsf{Oct} is the only configuration to resolve
Malicious Synchronization~(3/4), doing so in 334\,s.

\paragraph{Convex and box hull.}
\textsf{CH} performs well on small instances
but can struggle on larger ones---though not monotonically:
it times out on Production Cell~4 yet solves instance~5 in 453\,s.
Empirically, convex hull abstractions can accumulate
increasingly complex polyhedra, at which point the underlying
arithmetic operations stall; whether this occurs appears to depend
on the geometry of the specific instance rather than its size alone.
\textsf{Box} avoids this by restricting to simpler shapes,
but its rectangular enclosure discards diagonal constraints
that are structurally important in the Production Cell's geometry,
which may lead to larger zones with more successors, potentially explaining the timeouts on all but the smallest instance.
\textsf{BoxH} can avoid both pitfalls: on Malicious Synchronization~(3/3),
it succeeds where \textsf{CH} and \textsf{Box} individually time out.

\paragraph{Octagonal hull.}
\textsf{Oct} is the most robust configuration:
it solves every instance solved by any other method
and is the only one to solve Malicious Synchronization~(3/4).
It appears to strike the right balance between aggressiveness
and geometric fidelity,
retaining enough diagonal structure to keep zone operations cheap
without over-approximating to the degree that causes the blowup seen with \textsf{Box}.
\textsf{OctH} behaves similarly to \textsf{Oct} across most instances,
with no consistent advantage in either direction.

\paragraph{Location-based abstraction.}
\textsf{Loc} can reach a state count comparable to the hull methods
yet be disproportionately slow:
on Malicious Synchronization~(3/3) it explores 1{,}438 states in 475\,s,
versus 24.3\,s for \textsf{Oct}.
The larger zones likely introduce spurious transitions
in the parametric zone graph,
causing repeated propagation through states
that do not contribute to convergence.

\paragraph{On-the-fly merge.}
\textsf{Mrg} is the strongest configuration that introduces
no geometric over-approximation of zones.
It handles all Production Cell instances comfortably
and is competitive on Malicious Synchronization,
making it a reliable fallback when a tighter zone representation
is preferred over raw speed.

Our evaluation is limited to two benchmark families; further case studies are needed to confirm that the relative performance of the abstractions generalises.

%% file: case_study.tex

\label{sec:casestudy}
We introduce a case study inspired by an adversarial IoT scenario in which a central scheduler coordinates multiple IoT devices under adversarial delays. We refer to this benchmark as \emph{Malicious Synchronization} in the experiments, reflecting the adversary's goal of forcing simultaneous transmissions. The setting is easily scalable in the number of devices and naturally induces the synchronized timing patterns that cause symbolic state explosion.

\subsubsection{System Overview}
The system consists of a network of battery-powered IoT devices that periodically transmit data to a cloud server, a central scheduler that grants transmission requests to prevent network congestion, and an adversary that can delay transmissions by dropping packets (modelled as a time delay). The adversary's goal is to force enough devices to transmit simultaneously, causing congestion. The adversary observes scheduler state but cannot interfere with scheduler-device communication, and has a limited energy budget bounding the number of delays.

\subsubsection{Parametric Timed Game Model}
The system is modelled as a network of PTG: one for each IoT device and one for the scheduler. Data transmission is simulated by enforcing a time delay in a \texttt{transmission} location of each device. Each device has an uncontrollable action, available when the adversary has remaining energy, that models adversarial delay by injecting additional time before transmission begins.

Three values are parametric: the minimum time between consecutive grants, the maximum adversarial delay per intervention, and the transmission duration. The adversary's energy budget is a discrete constant that bounds the total number of delay interventions. The goal is to reach a state where all devices have completed their data transmission, while avoiding states in which too many devices transmit simultaneously, causing congestion. 
An overview and the \imitator models for the two-device instance are given in \cref{app:case_study,app:case_study_model}.

%% file: table_results.tex
\newcolumntype{I}{!{\color{black!50}\vrule width 0.4pt}}
\newcolumntype{O}{!{\color{black}\vrule width 0.4pt}}
\newcommand{\TOo}{\multicolumn{2}{c!{\color{black}\vrule width 0.4pt}}{\timeout}}
\newcommand{\TOl}{\multicolumn{2}{c}{\timeout}}

\setlength{\tabcolsep}{2pt}
\begin{tabular}{
  l
  rIlO
  rIlO
  rIlO
  rIlO
  rIlO
  rIlO
  rIlO
  rIlO
  rIlO
  rIl
}
\toprule
 & \multicolumn{4}{c}{\textit{Prior work}} &
   \multicolumn{16}{c}{\textit{This paper}} \\
\cmidrule(r){2-5} \cmidrule(l){6-21}
\textbf{Instance} &
 \multicolumn{2}{c}{\textbf{Baseline}} &
 \multicolumn{2}{c}{\textbf{Inc}} &
 \multicolumn{2}{c}{\textbf{DInc}} &
 \multicolumn{2}{c}{\textbf{Mrg}} &
 \multicolumn{2}{c}{\textbf{Loc}} &
 \multicolumn{2}{c}{\textbf{CH}} &
 \multicolumn{2}{c}{\textbf{Oct}} &
 \multicolumn{2}{c}{\textbf{Box}} &
 \multicolumn{2}{c}{\textbf{OctH}} &
 \multicolumn{2}{c}{\textbf{BoxH}} \\
\midrule
\multicolumn{21}{l}{\textit{Malicious Synchronization}} \\[3pt]
(2/2) & 2.71 & 1319 & 1.33 & 561 & 1.01 & 318 & 0.93 & 274 & 1.29 & 212 & 0.85 & 212 & 0.74 & 212 & 0.66 & 212 & 0.81 & 212 & 0.83 & 212 \\
(2/3) & 32.9 & 11079 & 16.5 & 4428 & 9.76 & 2406 & 9.11 & 2158 & 10.5 & 1398 & 7.49 & 1398 & 5.96 & 1398 & 5.26 & 1398 & 7.46 & 1398 & 7.54 & 1398 \\
(3/3) & \TOo & 679 & 25694 & 574 & 14699 & 50.2 & 4416 & 475 & 1438 & \TOo & 24.3 & 1438 & \TOo & 47.5 & 1438 & 35.7 & 1438 \\
(2/4) & 485 & 94504 & 238 & 34896 & 125 & 17815 & 128 & 16484 & 64.6 & 8952 & \TOo & 49.8 & 8952 & 39.3 & 8952 & 61.4 & 8952 & 62.6 & 8952 \\
(3/4) & \TOo & \TOo & \TOo & \TOo & \TOo & \TOo & 334 & 9592 & \TOo & \TOo & \TOl \\
(4/4) & \TOo & \TOo & \TOo & \TOo & \TOo & \TOo & \TOo & \TOo & \TOo & \TOl \\
\midrule
\multicolumn{21}{l}{\textit{Production Cell}} \\[3pt]
1 & 0.34 & 771 & 0.07 & 77 & 0.07 & 73 & 0.06 & 59 & 0.05 & 91 & 0.05 & 55 & 0.06 & 55 & 0.06 & 55 & 0.05 & 55 & 0.05 & 55 \\
2 & \TOo & 0.51 & 507 & 0.43 & 424 & 0.31 & 306 & 1.10 & 544 & 0.29 & 222 & 0.43 & 222 & \TOo & 0.33 & 222 & 0.32 & 222 \\
3 & \TOo & 2.25 & 1576 & 1.19 & 913 & 0.60 & 476 & \TOo & 1.17 & 293 & 0.94 & 293 & \TOo & 0.70 & 293 & 0.96 & 293 \\
4 & \TOo & 23.2 & 7822 & 11.5 & 4054 & 2.31 & 973 & \TOo & \TOo & 3.91 & 684 & \TOo & 2.92 & 684 & \TOl \\
5 & \TOo & 111 & 26749 & 63.8 & 13704 & 6.14 & 2073 & \TOo & 453 & 1360 & 13.8 & 1360 & \TOo & 13.5 & 1360 & \TOl \\
\bottomrule
\end{tabular}

%% file: conclusion.tex

\label{sec:concl}

We presented a general abstraction framework for parametric timed games that reduces
the symbolic state space while preserving correctness of both parameter synthesis
and winning strategies. The framework subsumes the existing zone
inclusion technique for PTG and lifts double inclusion and zone merging from PTA and location-based abstraction from timed games to the PTG setting, while also introducing new hull abstractions inspired by classical numerical abstract domains.
Experiments on a production cell benchmark and a novel adversarial IoT case study show order-of-magnitude reductions in explored states, with octagonal hull emerging as the most robust configuration, solving every tractable instance, including instances previously intractable for existing techniques.

Several directions remain open.
First, combining abstractions, for instance merge steps followed by hull steps, could exploit the strengths of both.
Second, the hybrid configurations showed promise but relied on a fixed constraint-count threshold; a principled heuristic for switching from convex to a coarser hull would make these strategies more broadly applicable.
Finally, evaluating the framework on benchmarks with richer clock and parameter structure, would clarify which abstractions suit which problem characteristics.

%% file: appendix.tex

\section{Case Study Model}
\label{app:case_study_model}

\Cref{fig:scheduler,fig:devices} show the IMITATOR models for the two-device instance of the Malicious Synchronization case study. Urgent locations (U, yellow) disallow time passage; normal locations are green. Dotted edges are internal transitions, colored solid edges are synchronized actions, and dashed edges are uncontrollable (adversarial) actions.

\begin{figure}[H]
    \centering
    \includegraphics[width=0.95\linewidth]{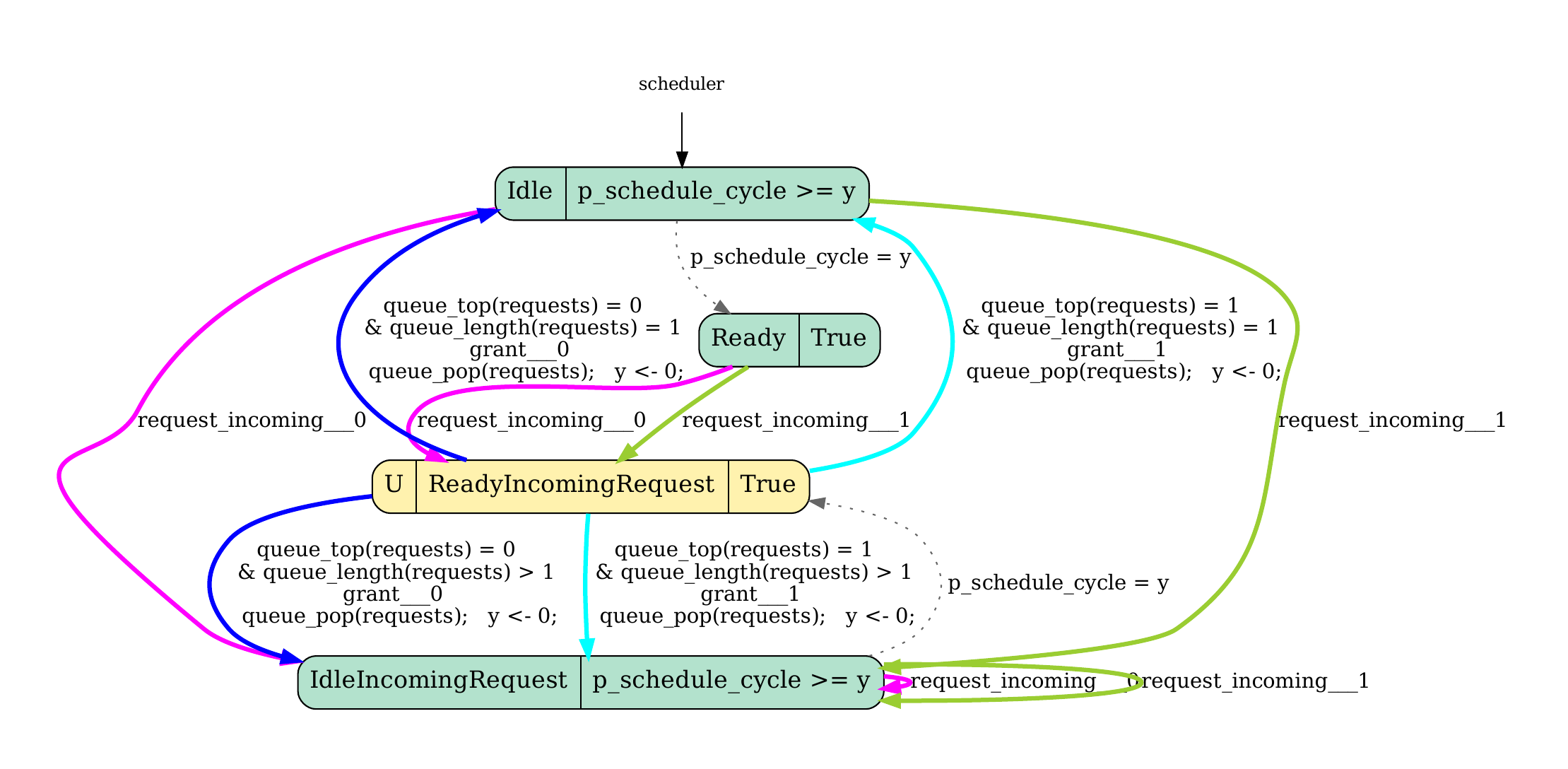}
    \caption{Scheduler automaton.\label{fig:scheduler}}
\end{figure}

\begin{figure}[H]
    \centering
    \includegraphics[width=0.95\linewidth]{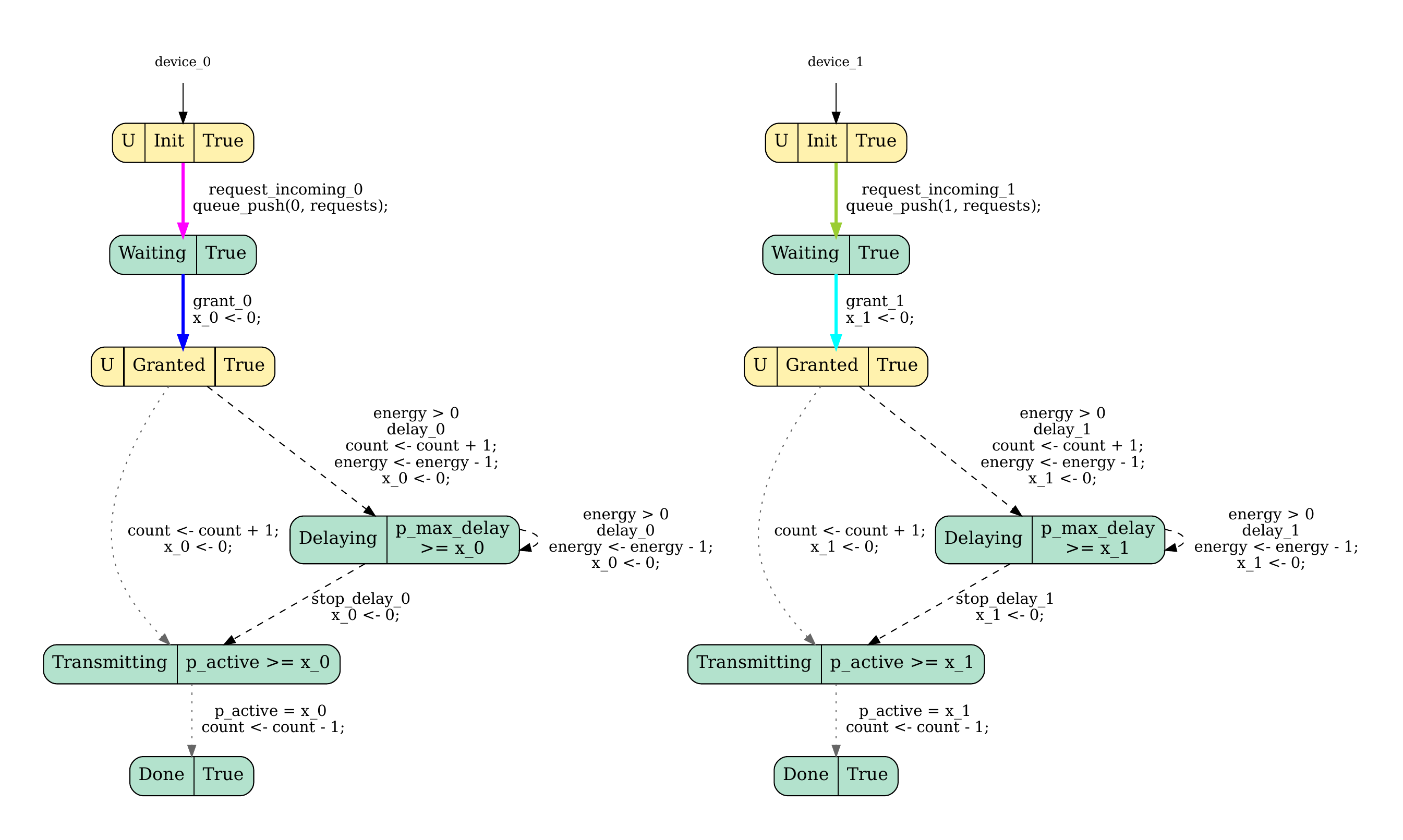}
    \caption{Device automata (device\_0 and device\_1).\label{fig:devices}}
\end{figure}
\newpage
\section{Case Study Overview}
\label{app:case_study}

\Cref{fig:case_study} gives a high-level overview of the adversarial IoT scheduling scenario described in \cref{sec:casestudy}.

\begin{figure}
    \centering
    \input{case_study_tikz.tex}
    \caption{Overview of the adversarial IoT scheduling scenario.\label{fig:case_study}}
\end{figure}

%% file: case_study_tikz.tex
\definecolor{ColorRequest}{HTML}{FE6100}
\definecolor{ColorGrant}{HTML}{785EF0}
\definecolor{ColorTransmit}{HTML}{DC267F}

\tikzset{
  device/.pic={
    \draw[rounded corners] (0,0) rectangle (2,1);
    \draw[fill=black] (.2,.8) rectangle (.4,.7);
    \draw[very thick] (1, 1) -- (1, 1.3);
    \draw[fill=black] (1,1.3) circle (.05);
    \begin{scope}
      \clip (1,1.3) -- (.6,1) -- (.6,1.6) -- (1,1.3)
         -- (1.4,1) -- (1.4,1.6) -- (1,1.3);
      \draw[very thick] (1,1.3) circle (.2);
      \draw[very thick] (1,1.3) circle (.35);
    \end{scope}
    \coordinate (-west) at (0,.5);
    \coordinate (-east) at (2,.5);
    \coordinate (-south) at (1,0);
    \coordinate (-nw) at (.4,1);
    \coordinate (-ne) at (1.6,1);
    \node at (1.7,.7) {\tikzpictext};
  },
  scheduler/.pic={
    \draw[rounded corners] (-1.5,-.5) rectangle (1.5,.5);
    \node at (0, 0) {\textsc{Scheduler}};
    \coordinate (-westhigh) at (-1.5,.2);
    \coordinate (-westlow) at (-1.5,-.2);
    \coordinate (-easthigh) at (1.5,.2);
    \coordinate (-eastlow) at (1.5,-.2);
    \coordinate (-southleft) at (-.6, -.5);
    \coordinate (-southright) at (.6, -.5);
  },
  adv/.pic={
    \draw[fill=black] 
    (0,0) circle (.3);
    \draw[rounded corners=.5mm, fill=black] 
    (-.3,0) -- 
    (-.3,-.5) -- (-.2,-.35) -- 
    (-.1,-.5) -- (0,-.35) -- 
    (.1,-.5) -- (.2,-.35) --
    (.3,-.5) -- (.3,0);
    \draw[fill=white](-.12,0) circle (.1)
    (.12,0) circle (.10);
    \coordinate (-north) at (0,.3); 
    \coordinate (-south) at (0,-.5);
  },
  data/.pic={
    \draw[thick,solid, rounded corners=.1mm, -] (0,.1) rectangle (.2,-.1);
    \draw[thick,solid, rounded corners=.1mm] (.06,.05) rectangle (.14,0);
    \draw[thick,solid, rounded corners=.1mm] 
    (0,.1) -- (0.1,.2) -- 
    (.3,.2) -- (0.2,.1);
    \draw[thick,solid, rounded corners=.1mm] 
    (.2,-.1) -- (.3,0) -- (.3,.2);
    \draw[thick,solid] (.1,.1) -- (.2,.2);
  }
}
\begin{tikzpicture}
  \tikzstyle{processed}=[preaction={fill, white}, pattern=crosshatch, pattern color=purple!70!blue, rounded corners];
  \tikzstyle{unprocessed}=[preaction={fill, white},pattern=north west lines, pattern color=blue];

  \tikzstyle{request}=[->,rounded corners, dashed, thick, color=ColorRequest];
  \tikzstyle{grant}=[->,rounded corners, dotted, very thick, color=ColorGrant];
  \tikzstyle{transmit}=[rounded corners, dashdotted, very thick, color=ColorTransmit];

  \draw pic[pic text=$1$] (d1) at (0,0) {device};
  \draw pic[pic text=$2$] (d2) at (3,0) {device};
  \draw pic[pic text=$n$] (d3) at (6,0) {device};
  \node at (5.5,.5) {\textsc{\dots}};
  \draw pic (sched) at (4,3) {scheduler};

  \draw[request] (d1-nw) |-  node [near end,above] {\textsc{request}}  (sched-westhigh) ;
  \draw[grant] (sched-westlow) -| node[near end, right] {\textsc{grant}} (d1-ne);

  \draw[request] (d2-nw) -- (sched-southleft);
  \draw[grant] (sched-southright) -- (d2-ne);
  
  \draw[request] (d3-ne) |-  (sched-easthigh) ;
  \draw[grant] (sched-eastlow) -| (d3-nw);

  \draw[scale=1, transform shape] pic (adv1) at (1, -1) {adv};
  \draw[scale=1, transform shape] pic (adv2) at (4, -1) {adv};
  \draw[scale=1, transform shape] pic (adv3) at (7, -1) {adv};

  \draw[transmit] (d1-south) -- pic[right=2mm] {data} (adv1-north);
  \draw[transmit] (d2-south) -- pic[right=2mm] {data} (adv2-north);
  \draw[transmit] (d3-south) -- pic[right=2mm] {data} (adv3-north);

  \node[cloud, draw, minimum width = 3cm,
    minimum height = 2cm] (cloud) at (4,-3.5) {\textsc{Server}};

  \draw[transmit,->] (adv1-south) |- node [very near start, right=2mm] {\textsc{transmit}} pic[near start, right=2mm, -] {data}
  pic[near end, above=2mm, -] {data} (cloud.west);
  \draw[transmit,->] (adv2-south) -- pic[right=2mm, -] {data} (cloud.north);
  \draw[transmit,->] (adv3-south) |- pic[near start, left=5mm, -] {data}
  pic[near end, above=2mm, -] {data}  (cloud.east);
  
  \node at (1.9,-1.1) {\textsc{drop}};

\end{tikzpicture}

%% file: main.bbl
\begin{thebibliography}{10}

\bibitem{Intro-TA}
Rajeev Alur and David~L. Dill.
\newblock A theory of timed automata.
\newblock {\em Theoretical Computer Science}, 126(2):183--235, 1994.

\bibitem{DBLP:conf/stoc/AlurHV93}
Rajeev Alur, Thomas~A. Henzinger, and Moshe~Y. Vardi.
\newblock Parametric real-time reasoning.
\newblock In {\em {STOC}}, pages 592--601. {ACM}, 1993.

\bibitem{DBLP:conf/cav/Andre21}
{\'{E}}tienne Andr{\'{e}}.
\newblock {IMITATOR} 3: Synthesis of timing parameters beyond decidability.
\newblock In {\em {CAV} {(1)}}, volume 12759 of {\em Lecture Notes in Computer
  Science}, pages 552--565. Springer, 2021.

\bibitem{DBLP:conf/atva/AndreFS13}
{\'{E}}tienne Andr{\'{e}}, Laurent Fribourg, and Romain Soulat.
\newblock Merge and conquer: State merging in parametric timed automata.
\newblock In {\em {ATVA}}, Lecture Notes in Computer Science, pages 381--396.
  Springer, 2013.

\bibitem{DBLP:conf/rp/AndreLR15}
{\'{E}}tienne Andr{\'{e}}, Didier Lime, and Olivier~H. Roux.
\newblock Integer-complete synthesis for bounded parametric timed automata.
\newblock In {\em {RP}}, volume 9328 of {\em Lecture Notes in Computer
  Science}, pages 7--19. Springer, 2015.

\bibitem{DBLP:conf/formats/AndreMPP22}
{\'{E}}tienne Andr{\'{e}}, Dylan Marinho, Laure Petrucci, and Jaco van~de Pol.
\newblock Efficient convex zone merging in parametric timed automata.
\newblock In {\em {FORMATS}}, volume 13465 of {\em Lecture Notes in Computer
  Science}, pages 200--218. Springer, 2022.

\bibitem{DBLP:conf/nfm/ArcileA22}
Johan Arcile and {\'{E}}tienne Andr{\'{e}}.
\newblock Zone extrapolations in parametric timed automata.
\newblock In {\em {NFM}}, Lecture Notes in Computer Science, pages 451--469.
  Springer, 2022.

\bibitem{DBLP:journals/scp/BagnaraHZ08}
Roberto Bagnara, Patricia~M. Hill, and Enea Zaffanella.
\newblock The parma polyhedra library: Toward a complete set of numerical
  abstractions for the analysis and verification of hardware and software
  systems.
\newblock {\em Sci. Comput. Program.}, 72(1-2):3--21, 2008.

\bibitem{DBLP:conf/rtss/Balarin96}
Felice Balarin.
\newblock Approximate reachability analysis of timed automata.
\newblock In {\em {RTSS}}, pages 52--61. {IEEE} Computer Society, 1996.

\bibitem{DBLP:conf/tacas/BehrmannBLP04}
Gerd Behrmann, Patricia Bouyer, Kim~Guldstrand Larsen, and Radek Pel{\'{a}}nek.
\newblock Lower and upper bounds in zone based abstractions of timed automata.
\newblock In {\em {TACAS}}, Lecture Notes in Computer Science, pages 312--326.
  Springer, 2004.

\bibitem{DBLP:conf/sefm/BezdekBBC16}
Peter Bezdek, Nikola Benes, Jiri Barnat, and Ivana Cern{\'{a}}.
\newblock {LTL} parameter synthesis of parametric timed automata.
\newblock In {\em {SEFM}}, Lecture Notes in Computer Science, pages 172--187.
  Springer, 2016.

\bibitem{DBLP:conf/concur/CassezDFLL05}
Franck Cassez, Alexandre David, Emmanuel Fleury, Kim~Guldstrand Larsen, and
  Didier Lime.
\newblock Efficient on-the-fly algorithms for the analysis of timed games.
\newblock In {\em {CONCUR}}, Lecture Notes in Computer Science, pages 66--80.
  Springer, 2005.

\bibitem{DBLP:conf/popl/CousotC77}
Patrick Cousot and Radhia Cousot.
\newblock Abstract interpretation: {A} unified lattice model for static
  analysis of programs by construction or approximation of fixpoints.
\newblock In {\em {POPL}}, pages 238--252. {ACM}, 1977.

\bibitem{DBLP:conf/popl/CousotH78}
Patrick Cousot and Nicolas Halbwachs.
\newblock Automatic discovery of linear restraints among variables of a
  program.
\newblock In {\em {POPL}}, pages 84--96. {ACM} Press, 1978.

\bibitem{abstractions_artifact}
Mikael~B. Dahlsen-Jensen, Laure Petrucci, and J.C. van~de Pol.
\newblock Artifact for "state-space abstractions for parametric timed games".
\newblock Zenodo,
  \href{https://doi.org/10.5281/zenodo.19680181}{10.5281/zenodo.19680181},
  April 2026.

\bibitem{DBLP:conf/tacas/DahlsenJensenFPP24}
Mikael~Bisgaard Dahlsen{-}Jensen, Baptiste Fievet, Laure Petrucci, and Jaco
  van~de Pol.
\newblock On-the-fly algorithm for reachability in parametric timed games.
\newblock In {\em {TACAS} {(3)}}, Lecture Notes in Computer Science, pages
  194--212. Springer, 2024.

\bibitem{DBLP:conf/qestformats/DahlsenJensenFPP25}
Mikael~Bisgaard Dahlsen{-}Jensen, Baptiste Fievet, Laure Petrucci, and Jaco
  van~de Pol.
\newblock Controller synthesis for parametric timed games.
\newblock In {\em {QEST+FORMATS}}, Lecture Notes in Computer Science, pages
  314--332. Springer, 2025.

\bibitem{DBLP:conf/lics/AlfaroGJ04}
Luca de~Alfaro, Patrice Godefroid, and Radha Jagadeesan.
\newblock Three-valued abstractions of games: Uncertainty, but with precision.
\newblock In {\em {LICS}}, pages 170--179. {IEEE} Computer Society, 2004.

\bibitem{DBLP:conf/avmfss/Dill89}
David~L. Dill.
\newblock Timing assumptions and verification of finite-state concurrent
  systems.
\newblock In {\em Automatic Verification Methods for Finite State Systems},
  Lecture Notes in Computer Science, pages 197--212. Springer, 1989.

\bibitem{DBLP:conf/sas/HenzingerMMR00}
Thomas~A. Henzinger, Rupak Majumdar, Freddy Y.~C. Mang, and
  Jean{-}Fran{\c{c}}ois Raskin.
\newblock Abstract interpretation of game properties.
\newblock In {\em {SAS}}, Lecture Notes in Computer Science, pages 220--239.
  Springer, 2000.

\bibitem{DBLP:conf/tacas/HuneRSV01}
Thomas Hune, Judi Romijn, Mari{\"{e}}lle Stoelinga, and Frits~W. Vaandrager.
\newblock Linear parametric model checking of timed automata.
\newblock In {\em {TACAS}}, Lecture Notes in Computer Science, pages 189--203.
  Springer, 2001.

\bibitem{DBLP:conf/concur/JensenLLS25}
Nicolaj~{\O}. Jensen, Kim~G. Larsen, Didier Lime, and Jir{\'{\i}} Srba.
\newblock On-the-fly symbolic algorithm for timed {ATL} with abstractions.
\newblock In {\em {CONCUR}}, volume 348 of {\em LIPIcs}, pages 25:1--25:19.
  Schloss Dagstuhl - Leibniz-Zentrum f{\"{u}}r Informatik, 2025.

\bibitem{DBLP:conf/wodes/JovanovicFLR12}
Aleksandra Jovanovic, S{\'{e}}bastien Faucou, Didier Lime, and Olivier~H. Roux.
\newblock Real-time control with parametric timed reachability games.
\newblock In {\em {IFAC WODES}}, pages 323--330. Elseviers, 2012.

\bibitem{DBLP:journals/ijcon/JovanovicLR19}
Aleksandra Jovanovic, Didier Lime, and Olivier~H. Roux.
\newblock A game approach to the parametric control of real-time systems.
\newblock {\em Int. J. Control}, 92(9):2025--2036, 2019.

\bibitem{DBLP:journals/sttt/LarsenPY97}
Kim~Guldstrand Larsen, Paul Pettersson, and Wang Yi.
\newblock {UPPAAL} in a nutshell.
\newblock {\em Int. J. Softw. Tools Technol. Transf.}, 1(1-2):134--152, 1997.

\bibitem{ProdCellCaseStudy}
Claus Lewerentz and Thomas Lindner.
\newblock {\em Case study ``production cell'': A comparative study in formal
  specification and verification}, pages 388--416.
\newblock Springer Berlin Heidelberg, Berlin, Heidelberg, 1995.

\bibitem{Classic-TG}
Oded Maler, Amir Pnueli, and Joseph Sifakis.
\newblock On the synthesis of discrete controllers for timed systems.
\newblock In Ernst~W. Mayr and Claude Puech, editors, {\em STACS 95}, pages
  229--242, Berlin, Heidelberg, 1995. Springer Berlin Heidelberg.

\bibitem{DBLP:conf/wcre/Mine01}
Antoine Min{\'{e}}.
\newblock The octagon abstract domain.
\newblock In {\em {WCRE}}, page 310. {IEEE} Computer Society, 2001.

\bibitem{DBLP:conf/cav/PeterEM11}
Hans{-}J{\"{o}}rg Peter, R{\"{u}}diger Ehlers, and Robert Mattm{\"{u}}ller.
\newblock Synthia: Verification and synthesis for timed automata.
\newblock In {\em {CAV}}, Lecture Notes in Computer Science, pages 649--655.
  Springer, 2011.

\end{thebibliography}
